\newcommand{\EXPSPACE}{\textsc{ExpSpace}{}}
\newcommand{\NLOGSPACE}{\textsc{NLogSpace}{}}
\newcommand{\NP}{\textsc{NP}}
\newcommand{\N}{\mathbb{N}}
\newcommand{\Z}{\mathbb{Z}}
\newcommand{\R}{\mathbb{R}}
\newcommand{\Q}{\mathbb{Q}}
\newcommand{\tuple}[1]{\langle #1 \rangle}
\newcommand{\set}[1]{\{ #1 \}}
\newcommand{\VASS}{\textsc{VASS}}
\newcommand{\allConditions}{\textbf{(U1)}, \textbf{(B1)} and \textbf{(BU)}}
\newcommand{\limavginfS}[1]{\textsc{LimAvgInf}_{#1}}
\newcommand{\limavgsupS}[1]{\textsc{LimAvgSup}_{#1}}
\newcommand{\expectedGain}{\mathbb{E}(Gain)}
\newcommand{\Bounded}{\mathbf{B}}
\newcommand{\UnBounded}{\mathbf{U}}
\newcommand{\vass}{\mathcal{A}}
\newcommand{\comp}{\pi}
\newcommand{\keyIdeas}{\subparagraph*{Key ideas.}}
\newcommand{\connectA}{\textbf{in}}
\newcommand{\connectB}{\textbf{out}}
\newcommand{\cycleI}{\mathbf{c}}
\newcommand{\cycleD}{\mathbf{d}}
\newcommand{\cycleE}{\mathbf{e}}
\newcommand{\cycleF}{\mathbf{f}}
\title{Multi-dimensional Long-Run Average Problems for Vector Addition Systems with States}
\titlerunning{Multi-dimensional Long-Run Average Problems for VASS}
\author{Krishnendu Chatterjee}{IST Austria, Austria}{krish.chat@ist.ac.at}{https://orcid.org/0000-0002-4561-241X}{The Austrian Science Fund (FWF) NFN grant S11407-N23 (RiSE/SHiNE).} 
\author{Thomas A. Henzinger}{IST Austria, Austria}{tah@ist.ac.at}{}{The Austrian Science Fund (FWF) grants S11402-N23 (RiSE/ShiNE) and Z211-N23 (Wittgenstein Award).}
\author{Jan Otop}{University of Wroc\l{}aw, Poland}{jotop@cs.uni.wroc.pl}{https://orcid.org/0000-0002-8804-8011}{The National Science Centre (NCN), Poland under grant 2017/27/B/ST6/00299.}
\authorrunning{K.\ Chatterjee, T.A.\ Henzinger, J.\ Otop}
\keywords{vector addition systems, mean-payoff, multidimension, probabilistic semantics}
\newcommand{\Path}{\textbf{p}}
\newcommand{\finPath}{\textbf{s}}
\newcommand{\gain}{\textsc{Gain}}
\newcommand{\prob}{\mathbb{P}}
\newcommand{\cost}{\textbf{c}}
\newcommand{\expected}{\mathbb{E}}
\newcommand{\compFin}{\rho}
\newcommand{\thresholds}{\vec{\lambda}}
\newcommand{\markov}{\mathcal{M}}
\newcommand{\avgS}[1]{\textsc{Avg}_{#1}}
\newcommand{\limavgS}[1]{\textsc{LimAvg}_{#1}}
\newcommand{\randComp}{\xi}
\begin{document}

\maketitle

\begin{abstract}
A vector addition system with states (VASS) consists of a finite set of states and counters. 
A transition changes the current state to the next state, and every counter is 
either incremented, or decremented, or left unchanged.
A state and value for each counter is a configuration; and
a computation is an infinite sequence of configurations with transitions between successive configurations. 
A probabilistic VASS consists of a VASS along with a probability distribution over
the transitions for each state.
Qualitative properties such as state and configuration reachability have been widely 
studied for VASS.
In this work we consider multi-dimensional long-run average objectives for VASS and 
probabilistic VASS. 
For a counter, the cost of a configuration is the value of the counter;
and the long-run average value of a computation for the counter is the long-run average 
of the costs of the configurations in the computation.
The multi-dimensional long-run average problem given a VASS and a threshold value for each counter,
asks whether there is a computation such that for each counter the long-run average 
value for the counter does not exceed the respective threshold.
For probabilistic VASS, instead of the existence of a computation, we consider whether 
the expected long-run average value for each counter does not exceed the respective threshold.
Our main results are as follows: we show that the multi-dimensional long-run average problem 
(a)~is NP-complete for integer-valued VASS;  
(b)~is undecidable for natural-valued VASS (i.e., nonnegative counters); and 
(c)~can be solved in polynomial time for probabilistic integer-valued VASS, and 
probabilistic natural-valued VASS when all computations are non-terminating. 
\end{abstract}

\section{Introduction}\label{sec-intro}

\subparagraph*{Vector Addition System with States (VASS) and probabilistic VASS.}
{\em Vector Addition Systems (VASs)} provide a powerful framework for analysis of 
parallel processes~\cite{EN94}. 
They are equivalent to the well-studied model of Petri Nets~\cite{KM69}.
The generalization of VASs with a finite-state transition system gives {\em Vector Addition Systems with States (VASS)}.
The model of VASS is as follows: there is a finite set of control states with 
transitions between them, and a set of $k$ counters, where at every transition between 
the control states each counter is either incremented, decremented, or remains unchanged.
For a VASS, a {\em configuration} is a control state and a valuation of each counter, 
and the transitions of the VASS determines the transitions between the configurations.
Thus a VASS is a finite description of an infinite-state transition system between 
the configurations.
The class of VASS where the counters can hold all possible integer values, are referred to as 
integer-valued VASS; and the class of VASS where the counters can hold only non-negative values,  
are referred to as natural-valued VASS.
A probabilistic VASS consists of a VASS along with probability distribution over the transitions 
for every state.

\subparagraph*{VASS Framework in Verification.}
VASS are an elegant mathematical framework for concurrent processes~\cite{EN94}, and 
have been widely studied in performance analysis of concurrent 
processes~\cite{DKO13:conc-verification-vass,GM12:asynchronous-verification-TOPLAS,KKW10:dynamic-cutoff-detection,KKW12:coverability-proof-minim}.
They have also been used in several other contexts, such as: 
(a)~analysis of parametrized systems~\cite{Bloem16},
(b)~abstract models for programs for bounds  analysis~\cite{SZV14}, 
(c)~interactions between components of an API in component-based 
synthesis~\cite{FMWDR17:component-based-synthesis}.
The~probabilistic VASS provide a natural model for problems mentioned above 
with stochasticity in the system~\cite{DBLP:conf/lics/BrazdilKKN15}.
Thus VASS and probabilistic VASS provide a rich framework for many 
problems in verification and program analysis.

\subparagraph*{Previous results for VASS.}
A computation (run) in a VASS is an infinite sequence of configurations with transitions between
successive configurations. 
The classical problems studied for VASS are as follows:
(a)~{\em control-state reachability} where given a set of target control states a computation satisfies the 
objective if a target state is reached; 
(b)~{\em configuration reachability} where given a set of target configurations a computation satisfies the objective
if a target configuration reached. 
For natural-valued VASS,
(a)~the control-state reachability problem is \EXPSPACE-complete: the 
\EXPSPACE-hardness is shown in~\cite{Esparza:PN,lipton1976reachability} 
and the upper bound follows from~\cite{RACKOFF1978223}; and 
(b)~the configuration reachability problem is decidable~\cite{Kosaraju82,DBLP:journals/tcs/Lambert92,leroux2012vector,DBLP:conf/stoc/Mayr81}, 
and a recent breakthrough result establishes non-elementary hardness~\cite{DBLP:conf/stoc/CzerwinskiLLLM19}.
For integer-valued VASS,
(a)~the control-state reachability problem is \NLOGSPACE-complete (by reduction to graph reachability);
(b)~the configuration reachability problem is \NP-complete. 
In probabilistic VASS, for the natural-valued class, even defining the probability measure 
over infinite computations is a challenging and complex problem~\cite{DBLP:conf/lics/BrazdilKKN15}, as computations 
that violate the non-negativity condition terminate as finite computations.

\subparagraph*{Long-run average objective and multi-dimensional long-run average problem.}
The classical problems for VASS consider qualitative (or Boolean) objectives where each computation is
either satisfactory or not.
In this work we consider multi-dimensional long-run average objective. 
For a counter, we consider the cost of a configuration as the value of the counter.
For a computation, the long-run average of the costs of the configurations of the computation is the 
long-run average value for the respective counter.
The multi-dimensional long-run average problem given a VASS and a threshold value for each counter,
asks whether there is a computation such that for each counter the long-run average 
value for the counter does not exceed the respective threshold.
For integer-valued probabilistic VASS, instead of the existence of a computation, we consider whether 
the expected long-run average value for each counter does not exceed the respective threshold.
For natural-valued probabilistic VASS, the presence of terminating runs makes even defining the 
probability measure complex.
We consider two variants: (a)~\emph{strict semantics} that require all computations to be non-terminating;
(b)~\emph{relaxed semantics} where we consider the conditional probability with respect to 
non-terminating runs.

\subparagraph*{Motivating examples.}
We present some motivating examples for the problems we consider.
First, consider a VASS where the counters represent different queue lengths, and each queue consumes 
a resource type (e.g., energy or memory or time delay) proportional to its length.
The multi-dimensional long-run average problems asks that the average consumption of each resource
does not exceed a desired threshold.
Second, consider a system that uses two different batteries, and the counters represent the charge levels.
At different states, different batteries are used, and we are interested in the long-run average 
charge of each battery. 
This is again modeled as the multi-dimensional long-run average problem.

\subparagraph*{Our contributions.} 
Our main contributions are as follows:
\begin{enumerate}

\item For non-probabilistic VASS we show that the multi-dimensional long-run average problem 
(a)~is \NP-complete for integer-valued VASS, and (b)~is undecidable for natural-valued VASS.

\item For probabilistic integer-valued VASS, we show that the 
multi-dimensional long-run average problem can be solved in polynomial time.
For natural-valued VASS, we show that the multi-dimensional problem can be solved
in polynomial-time for (a)~the strict semantics, and (b)~the relaxed semantics for strongly 
connected VASS such that the expected multi-dimensional long-run average is finite.
For the relaxed semantics and general natural-valued VASS, we show \EXPSPACE-hardness,
and the exact decidability and complexity remain open.
\end{enumerate}

\subparagraph*{Related works.}
For probabilistic VASS the long-run average behavior problem has been studied~\cite{DBLP:conf/lics/BrazdilKKN15},
as well as for other infinite-state models such as pushdown automata and games~\cite{AAHMKT14,CV17a,CV17b}.
However, these works consider that costs are associated with the transitions of the finite-state 
system and do not depend on the counter values; moreover, they do not consider the multi-dimensional problem.
In contrast, we consider costs that depend on the counter values, and hence on the configurations.
Costs based on configurations, specifically the content of the stack in pushdown automata, have been considered in~\cite{DBLP:conf/fsttcs/MichaliszynO17}.
Quantitative asymptotic bounds for polynomial-time termination in VASS have also been studied~\cite{BCKNVZ18,Ler18},
however, these works do not consider long-run average property.
Finally, a related model of automata with monitor counters with long-run average property have been considered
in~\cite{CHO16b,CHO16a}.
However, there is a crucial difference: in automata with monitor counters, counters are reset once the value is used. 
Moreover, the complexity results for automata with monitor counters are quite different from the results
we establish. 
Finally a recent work considers long-run average problem for VASS~\cite{DBLP:conf/concur/ChatterjeeHO19}. 
However, the cost is always single-dimensional with a linear combination of the counter values, and moreover,
probabilistic VASS have not been considered in~\cite{DBLP:conf/concur/ChatterjeeHO19}.

\section{Preliminaries}
For a sequence $w$, we define $w[i]$ as the $(i+1)$-th element of $w$ (we start with $0$)
and $w[i,j]$ as the subsequence $w[i] w[i+1] \ldots w[j]$. 
We allow $j$ to be $\infty$ for infinite sequences.
For a finite sequence $w$, we denote by $|w|$ its length; and for an
infinite sequence the length is $\infty$.
We use the same notation for vectors. 
For a vector $\vec{x} \in \R^k$  (resp., $\Q^k$, $\Z^k$ or $\N^k$), we define $x[i]$ as the $i$-th component of $\vec{x}$.

\subsection{Vector addition systems with states (VASS)}

A $k$-dimensional vector addition system with states (VASS) over $\Z$ (resp., over $\N$), referred to as $\VASS(\Z,k)$ (resp., $\VASS(\N,k)$), is 
a tuple $\vass = \tuple{Q, Q_0, \delta}$, where
(1)~$Q$ is a finite set of states,
(2)~$Q_0 \subseteq Q$ is a set of initial states, and
(3)~$\delta \subseteq Q \times Q \times \Z^k$ is a transition relation.
In a transition $(q,q',\vec{y})$, the vector $\vec{y}$ is called a \emph{counter update} as we refer to $k$ dimensions of a VASS as \emph{counters}.
We often omit the dimension in VASS and write $\VASS(\Z),\VASS(\N)$  if a definition or an argument is uniform w.r.t. the dimension.

We define the size of a VASS in a standard way assuming binary encoding of counter updates. 
Formally, the size of a VASS $\tuple{Q, Q_0, \delta}$ is defined as $|Q| + \sum_{(q,q',\vec{y}) \in \delta} \mathrm{len}(\vec{y})$, where $\mathrm{len}(\vec{y})$ is the length of the binary representation of $\vec{y}$.

\subparagraph*{Configurations and computations.} 
A \emph{configuration} of a $\VASS(\Z,k)$ $\vass$ is a pair from $Q \times \Z^k$, which consists of a state and a valuation of the counters.
A \emph{computation} of $\vass$ is an infinite sequence $\comp$ of configurations such that 
(a)~$\comp[0] \in Q_0 \times \{\vec{0}\}$, and 
(b)~for every $i \geq 0$, there exists $(q,q',\vec{y}) \in \delta$ such that 
$\comp[i] = (q,\vec{x})$ and $\comp[i+1] = (q',\vec{x}+\vec{y})$.
Note that, without loss of generality, we assume that the initial counter valuation is $\vec{0}$. We can encode any initial configuration in the VASS itself.

A computation of a $\VASS(\N,k)$ $\vass$ is a computation $\comp$ of $\vass$ considered as a $\VASS(\Z,k)$ such that 
the values of all counters are non-negative, i.e., for all $i$ we have $\comp[i] \in Q \times \N^k$.
Transitions of a $\VASS(\N)$ that make the value of some counter negative are disabled. 

We call a finite sequence $\compFin$ a \emph{subcomputation} of a $\VASS(\Z,k)$  (resp., $\VASS(\N,k)$) $\vass$, if it satisfies condition (b), i.e., all configurations are consistent with some transitions of $\vass$, and all configurations belong to $Q \times \Z^k$ (resp., $Q \times \N^k$).

\subparagraph*{Paths and cycles.} 
A path $\Path = (q_0,q'_0,\vec{y}_0), (q_1,q_1',\vec{y}_1), \ldots$ in a $\VASS(\Z)$ (resp., $\VASS(\N)$) $\vass$ is a  (finite or infinite) sequence of transitions (from $\delta$)
such that for all $0 \leq i < |\Path|$ we have $q'_i = q_{i+1}$. 
A finite path $\Path$ is a cycle if $\Path = (q_0,q'_0,\vec{y}_0), \ldots, (q_m,q'_m,\vec{y}_m)$ and $q_0 = q'_m$. 
Every computation in a $\VASS(\Z)$ (resp., $\VASS(\N)$) corresponds to the unique infinite path.
Conversely, every infinite path in a $\VASS(\Z)$ $\vass$ starting with $q_0 \in Q_0$ defines a computation in $\vass$. 
However, if $\vass$ is a $\VASS(\N,k)$, some paths do not correspond to valid computations due to non-negativity restriction posed on the counters. 

\subparagraph*{Cycle characteristics.}
For a path $\Path$ we define $\gain(\Path)$ as the vector of total counter change upon $\Path$.
Formally, for $\Path$ of length $n$ with counter updates $\vec{y}_1, \ldots, \vec{y}_{n}$ we define $\gain(\Path) = \sum_{i=1}^{n} \vec{y}_i$.

\subsection{Probabilistic semantics}

\subparagraph*{Markov chains.}
A \emph{Markov chain} is a tuple $\tuple{\Sigma,Q,Q_0,\delta,P,\mu}$ such that 
(1)~$\Sigma$ is a (finite) set of labels, 
(2)~$Q$ is a (finite) set of states, 
(3)~$Q_0$ is a set of initial states, 
(4)~$\delta \subseteq Q \times Q \times \Sigma$ is a transition relation, 
(5)~$P \colon \delta \to (0,1]$ is a probability distribution over transitions such that
for every $s \in S$ we have $\sum_{s' \in S,a \in \Sigma} p(s,s',a) = 1$, and
(6)~$\mu \colon Q_0 \to [0,1]$ is an initial distribution such that $\sum_{q \in Q_0} \mu(q) = 1$.

\subparagraph*{Probability measures defined by Markov chains.}
For a finite path $\Path$ in a Markov chain $\markov$, we define the probability of $\Path$, denoted by $\prob_{\markov}(\Path)$,
as the product of probabilities of transitions along $\Path$. 
For any $n > 0$, the probability $\prob_{\markov}(\cdot)$
is indeed a probability measure over paths of length $n$.
We extend this probability measure to infinite paths in the standard fashion. 
Let $X$ be the set of all infinite paths in $\markov$. 
For a basic open set $\Path \cdot X$, which is the set of all paths with the common prefix $\Path$, 
we define $\prob_{\markov}(\Path \cdot X)=\prob_{\markov}(\Path)$, and then the 
probability measure over infinite paths defined by $\markov$ is the unique
extension of the above measure (by Carath\'{e}odory's extension theorem~\cite{feller}).
We will denote the unique probability measure defined by $\markov$ as $\prob_{\markov}$.

\subparagraph*{Probabilistic VASS.} Probabilistic $\VASS$ generalize both $\VASS$ and Markov chains.
A~probabilistic $\VASS$ is a VASS, in which transitions are labeled with probabilities. 
It can be also considered to be an infinite-state Markov chain
over the set of states $Q \times \Z^k$ (resp., $Q \times \N^k$) and $\Sigma$ is a singleton.
Formally, a probabilistic VASS is a tuple $\vass = \tuple{Q, Q_0, \delta, P, \mu}$ such that 
(1)~$\tuple{Q, Q_0, \delta}$ is a VASS ($\VASS(\Z)$ or $\VASS(\N)$),
(2)~$P \colon \delta \to (0,1]$ is the probability distribution over transitions, which for every $q \in Q$ satisfies $\sum_{(q,q',\vec{y}) \in \delta} P(q,q',\vec{y}) = 1$, and
(3)~$\mu \colon Q_0 \to [0,1]$ is the initial distribution, which satisfies $\sum_{q \in Q_0} \mu(q) = 1$.

\subparagraph*{Probability measures defined by probabilistic VASS.}
A probabilistic $\VASS(\Z)$ (resp. $\VASS(\N)$) defines the probability measure over its computations. 
First, a probabilistic $\VASS(\Z)$ (resp., $\VASS(\N)$) $\vass$ defines the probability measure over its infinite paths
in the same way as a Markov chain does.
In $\VASS(\Z)$, every path corresponds to a computation and hence the probability measure over infinite paths carries over to computations.

\begin{itemize}
\item We define $\prob_{\vass}$ as the probability measure on computations carried over from infinite paths.
\end{itemize}

However, in $\VASS(\N)$ some paths may not correspond to valid computations. For that reason, defining the probability 
measure over computations poses difficulties~\cite{DBLP:conf/lics/BrazdilKKN15}. We consider two possible solutions: the \emph{strict} and the \emph{relaxed} semantics.
\begin{itemize}
    \item Under the strict semantics, we require that all paths correspond to valid computations and then 
we define the probability measure $\prob_{\vass}^s$ over computations as in the $\VASS(\Z)$ case. 

    \item Under the relaxed semantics, we require the set of paths corresponding to valid computations to have a non-zero
probability, and we define the probability measure $\prob_{\vass}^r$ over computations as the conditional probability under the condition
being the set of all paths that correspond to valid computations.
\end{itemize}

\subparagraph*{Random computations.}
To indicate that we consider a computation picked at random, we denote by $\randComp$ computations considered as random events.

\begin{remark} Under the strict semantics we require that every path corresponds to a valid computation, i.e., no counter gets a negative value. 
Note that relaxing \emph{all} to \emph{almost all} (i.e., with probability $1$) gives us the same notion.
Being a valid computation is a safety property and hence if the set of paths corresponding to valid computations
has probability $1$, then it is the set of all paths.
\end{remark}

\section{Problems}

In this section, we define \emph{the multi-dimensional average problem} 
and
\emph{the expected multi-dimensional average problem}, which we study in this paper.
We define the averages over selected positions; averages are parametrized by a set of states $S$, called \emph{selected states}, 
which determines meaningful configurations over which we compute the average, while skipping other configurations. 
This allows us to specify properties based on desired events (from $S$) rather than steps. 

\subparagraph*{Averages and limit-averages over selecting states.}
Let $\vass = \tuple{Q,Q_0,\delta}$ be a $\VASS(\Z,k)$ (resp., $\VASS(\N,k)$) and $S \subseteq Q$ be a set of \emph{selecting states}.
Fix a counter $i \in \{1, \ldots, k\}$.
For a finite subcomputation $\compFin$ of $\vass$, which contains at least one configuration from $S \times \Z^k$ (resp., $S \times \N^k$), 
we define the \emph{average value of counter $i$ (over $S$)}, denoted by $\avgS{S}^i(\compFin)$, as 
the average over values of counter $i$ over configurations with  the state belonging to $S$, i.e.,
we first pick a subsequence $(s_1, \vec{x}_1), \ldots, (s_m, \vec{x}_m)$ consisting of all configurations $(s,\vec{x})$ such that $s \in S$, and then take the average of the values of counter $i$:
\( 
 \avgS{S}^i(\compFin) = \frac{1}{m} \sum_{j=1}^m \vec{x}_j[i].
\)
If $\compFin$ has no configurations with states from $S$, then $\avgS{S}^i(\compFin)$ is undefined.
For an infinite sequence $\comp$ of configurations, which contains infinitely many configurations from $S \times \Z^k$ (resp., $S \times \N^k$), 
we define 
the \emph{limit-average value of the counter $i$ (over $S$)}, denoted by $\limavgS{S}^i(\comp)$, as 
\(
\limavgS{S}^i(\comp) = \liminf_{k \to \infty} \avgS{S}^i(\comp[0,k-1]).
\)
If $\comp$ does not contain infinitely many configurations from  $S \times \Z^k$ (resp., $S \times \N^k$), then $\limavgS{S}^i(\comp)$ is undefined.

\subparagraph*{Multi-dimensional averages and limit-averages over selecting states.}
We extend averages and limit-averages to multiple dimensions.
Let $\vec{S} = (S[1], \ldots, S[k])$ be a $k$-tuple of the subsets of $Q$.
For a (finite) subcomputation $\compFin$ and an (infinite) computation $\comp$, we define 
\[ 
\begin{split}
\avgS{\vec{S}}(\compFin) &= (\avgS{S[1]}^1(\compFin), \ldots, \avgS{S[k]}^k(\compFin)) \\
\limavgS{\vec{S}}(\comp) &= (\limavgS{S[1]}^1(\comp), \ldots, \limavgS{S[k]}^k(\comp))
\end{split}
\]
if all their components are defined. 
If any component of $\avgS{\vec{S}}(\compFin)$ (resp., $\limavgS{\vec{S}}(\comp)$) is undefined, the whole vector is undefined.

\begin{definition}[The multi-dimensional average problem for VASS]
Given a $\VASS(\N,k)$ (resp., $\VASS(\Z,k)$) $\vass$, $\vec{S} \in (2^Q)^k$ and $\thresholds \in \Q^k$, 
the \textbf{(multi-dimensional) average} problem asks whether there 
exists a computation $\comp$ such that $\limavgS{\vec{S}}(\comp)$ is defined and $\limavgS{\vec{S}}(\comp) \leq \thresholds$, i.e., 
the limit-averages of counter values over $\vec{S}$ are component-wise bounded by $\thresholds$.
\end{definition}

\subparagraph*{Expected limit-averages over selecting states.}
Consider a probabilistic $\VASS(\Z,k)$ (resp., $
\VASS(\N,k)$), which defines a probability measure $\prob_{\vass}$ (resp., $\prob_{\vass}^s$  
or $\prob_{\vass}^r$) over its computations. 
Let $\vec{S} = (S[1], \ldots, S[k])$ be a $k$-tuple of the subsets of $Q$.
The function $\randComp \mapsto \limavgS{S[i]}^i(\randComp)$ is a random variable w.r.t. $\prob_{\vass}$ (resp., $\prob_{\vass}^s$  
or $\prob_{\vass}^r$)  and 
we define $\expected_{\vass}(\limavgS{S[i]}^i)$ as the expected value of this random variable.
If the set of computations $\randComp$, at which $\limavgS{S[i]}^i(\randComp)$ is undefined, has a non-zero probability, 
then the expected value is undefined as well.
We extend the expectation to vectors and define the expected multi-dimensional limit-average as 
\[\expected_{\vass}(\limavgS{\vec{S}}) = (\expected_{\vass}(\limavgS{S[1]}^1), \ldots, \expected_{\vass}(\limavgS{S[k]}^k)).
\]
As above, the expected value $\expected_{\vass}(\limavgS{\vec{S}})$ is defined only if all components are defined.  

\begin{definition}[The expected (multi-dimensional) average problem for VASS]
Given a probabilistic VASS $\vass$, $\vec{S} \in (2^Q)^k$, 
 the \textbf{expected multi-dimensional average} problem asks to compute
the expected limit-averages over $\vec{S}$, i.e., $\expected_{\vass}(\limavgS{\vec{S}})$.
\end{definition}

\begin{remark}
In all complexity results for the multi-dimensional average and the expected multi-dimensional average problems, we consider VASS where the counter updates are encoded in binary.
\end{remark}

\section{Results on integer-valued VASS}

\subsection{The multi-dimension average problem}

Consider a $\VASS(\Z,k)$ $\vass = \tuple{Q,Q_0, \delta}$, a vector $\vec{S} \in (2^{Q})^k$ and thresholds
$\thresholds \in \Q^k$. For simplicity, we assume that $Q_{0} = \{q_0\}$ and hence $(q_0,\vec{0})$
is the initial configuration.

We present sufficient and necessary conditions for the existence of a computation $\comp$ with
$\limavgS{\vec{S}}(\comp) \leq \thresholds$. 
These conditions are expressed in terms of simple cycles in $\vass$, i.e., 
they stipulate  that for each counter $i$ there exist 
(a)~a simple cycle that can be \emph{iterated} 
to ensure that limit average infimum is consistent with the threshold $\thresholds[i]$, and 
(b)~a path to \emph{access} this cycle, and then to switch back to another cycle. 
These conditions can be check in $\NP$. 
We present main ideas assuming that $\vass$ is strongly connected. 

Assume that $\vass$ is strongly connected, i.e., it is strongly connected as a labeled graph.
We distinguish two types of counters based on their behavior in $\vass$: \emph{bounded} and \emph{unbounded}. 
We first assume that for every counter $i$ there is a cycle $\cycleI_i$ such that iterating this cycle 
decreases this counter's value, i.e., $\gain(\cycleI_i)[i] < 0$.
In such a case all counters are \emph{unbounded} and for any $\thresholds$ there exists a computation $\comp$ such that  $\limavgS{\vec{S}}(\comp) \leq \thresholds$.

\subparagraph*{The all-unbounded case.}
We assume that all counters are unbounded. Fix some $\thresholds \in \Q^k$. 
We construct  $\comp$ such that  $\limavgS{\vec{S}}(\comp) \leq \thresholds$ by interleaving strategies for each counter $i$ to 
make its partial average below $\thresholds[i]$. 
More precisely, we define the path $\Path$ of the form
\[
 \Path = \finPath^{1}_1 \finPath^{1}_2 \ldots \finPath^{1}_k \finPath^{2}_1 \ldots \finPath^{2}_k \ldots
\]
such that for every prefix $\finPath^{1}_1 \ldots \finPath^{j}_i$ of $\Path$,
the subcomputation $\compFin^{j}_{i}$ corresponding to that prefix satisfies $\avgS{S[i]}(\compFin^j_i) \leq \thresholds[i]$, i.e.,
 the partial average over $S[i]$ is bounded by $\thresholds[i]$. 
 We can construct such $\Path$ as follows. Suppose that a prefix of $\Path$ has been defined as above, and 
 we need to construct $\finPath^{j}_i$.
 There are two cases: if the cycle $\cycleI_i$ with $\gain(\cycleI_i)[i] < 0$ contains a selecting state from $S[i]$, then $\finPath^j_i = (\cycleI_i)^m$ for some large $m$, i.e., 
 we iterate $\cycleI_i$ long enough such that the average of the whole prefix computation is below $\thresholds[i]$.
 If $\cycleI_i$ does not contain any state from $S[i]$, then there exists $\cycleI_i'$ that contains a selecting state and 
$\gain(\cycleI_i')[i] < 0$. Indeed, let $\cycleD$ be a cycle from the initial state of $\cycleI_i$ to itself that contains a selecting state.
Then, $\cycleD \cdot \cycleI_i^N$ contains a selecting state and  $\gain(\cycleD \cdot \cycleI_i^N)[i] < 0$ for some $N$.

 Now, let $\comp$ be the computation corresponding to $\Path$. 
 For every counter $i$ there are infinitely many positions 
 at which the partial average over $S[i]$ at most  $\thresholds[i]$ and 
 hence $\limavgS{\vec{S}}(\comp) \leq \thresholds$.

\subparagraph*{The some-bounded case.}
Assume that for a counter $j$, there is no cycle such that iterating it decreases the value of counter $j$. 
In other words, for all cycles $c$ we have $\gain(c)[j] \geq 0$. We classify such a counter as \emph{bounded}. 
It is clearly lower bounded and for the limit average to be finite its has to be upper bounded. 
In consequence, in any computation $\comp$ with finite limit-average, all cycles $c$ that occur infinitely often  satisfy $\gain(c)[j] = 0$. 
This in turn restricts the set of cycles that can appear infinitely often in the considered paths, 
which makes other counters \emph{bounded}. 
We iterate this process until we reach a fixed point $\Bounded$, which is the set of all bounded counters.
The complement of $\Bounded$, denoted by $\UnBounded$, is the set of \emph{unbounded} counters. 

Note that for each unbounded counter $i \in \UnBounded$ there is a cycle $\cycleI_i$ such that:
\begin{description}
\item[(U1)\label{cond:uOne}] we have $\gain(\cycleI_i)[i] < 0$ and it contains a selecting state, and
\item[(U2)\label{cond:uTwoMinus}] for each bounded counter $j \in \Bounded$, we have $\gain(\cycleI_i)[j] = 0$.
\end{description}
It follows that similarly to the \emph{all-unbounded} case, we can make sure that the partial averages of unbounded counters are arbitrarily low. 

The limit average of a bounded counter depends on its initial value.
Indeed, in the extreme case, if the value of a counter $i$ does not change in any transition, then it is bounded and in every computation
the limit average of counter $i$ is precisely its initial value. 
However, to characterize cycles that witness low limit-averages of bounded counters it is more convenient to refer to a configuration that occurs infinitely often rather than 
the initial configuration. 
Therefore, we consider a \emph{recurring configuration} $(s_0, \vec{x})$ that is: 
(a)~reachable from the initial configuration $(q_0, \vec{0})$,
(b)~there are infinitely many configurations $(s_0,\vec{y})$ such that $\vec{y}$ and $\vec{x}$ agree on bounded counters.
We now drop the strongly-connected assumption on $\vass$.

Observe that switching between cycles for different (bounded or unbounded) counters may affect values of bounded counters. 
Therefore, for a bounded counter $i \in \Bounded$ we require that there is a cycle $\cycleI_i$, which 
(a)~can be \emph{accessed} with an appropriate path, and 
(b)~its average together with the initial value are bounded by $\thresholds[i]$. 
To make it more precise:
there exist a cycle $\cycleI_i$ and paths $\connectA_i, \connectB_i$ such that
\begin{description}
\item[(B1)\label{cond:bOne}] we have $\avgS{S[i]}(\compFin) \leq \thresholds[i]$, where 
$\compFin$ is the subcomputation corresponding to the cycle $\cycleI_i$ starting from the configuration reached 
from $(s_0,\vec{x})$ over the path $\connectA_i$, and 
\item[(B2)\label{cond:bTwo}]  $\connectA_i, \connectB_i$ are from $s_0$ to some $s \in \cycleI_i$ and from the same $s$ to $s_0$  respectively, and 
for each bounded counter $j \in \Bounded$, 
we have $\gain(\cycleI_i)[j] = 0$ and $\gain(\connectA_i \connectB_i)[j] = 0$.
\end{description}

Finally, we require that for all unbounded counters $i \in U$ there exist access paths $\connectA_i, \connectB_i$ 
as in condition~\textbf{\nameref{cond:bTwo}}, i.e., paths $\connectA_i, \connectB_i$ 
satisfy:
\begin{description}
\item[(U3)\label{cond:uTwo}]  $\connectA_i, \connectB_i$ are from $s_0$ to some $s \in \cycleI_i$ and from the same $s$ to $s_0$  respectively, and 
for each bounded counter $j \in B$, we have $\gain(\connectA_i \connectB_i)[j] = 0$.   
\end{description}
Condition~\textbf{\nameref{cond:uTwo}} is necessary as otherwise, switching between cycles for unbounded cycles and
bounded cycles could change values of bounded counters.
Observe that conditions~\textbf{\nameref{cond:uTwoMinus}} and~\textbf{\nameref{cond:uTwo}} together are the same as~\textbf{\nameref{cond:bTwo}}.
We unify these conditions into a single one denoted~\textbf{(BU)}. 

\subparagraph*{A witness for $\limavgS{\vec{S}} \leq \thresholds$.} A \emph{witness for $\limavgS{\vec{S}} \leq \thresholds$} is a tuple consisting of 
(a)~a (recurring) configuration $(s_0, \vec{x})$ reachable from the initial configuration $(q_0, \vec{0})$, 
(b)~a partition of counters into $\Bounded$ and $\UnBounded$, and
(c)~cycles $\cycleI_i$ and access paths $\connectA_i, \connectB_i$, for all $i$, which all satisfy conditions \allConditions.

First, we show that the existence of a witness for $\limavgS{\vec{S}} \leq \thresholds$ is sufficient for the existence 
of a computation $\comp$ with $\limavgS{\vec{S}}(\comp) \leq \thresholds$.

\keyIdeas{}
Using a witness, we construct a computation $\comp$ satisfying  $\limavgS{\vec{S}}(\comp) \leq \thresholds$ in a similar way as in the \emph{all-unbounded} case. 
The only difference here is that we use access paths to switch between cycles for different counters so that 
we switch between the counters in the state $s_0$, 
where the values of bounded counters are the same as in the configuration $(s_0, \vec{x})$. Due to condition~\textbf{(BU)}, we do not require $\vass$ to be strongly connected.

In consequence, we have the following:

\begin{restatable}{lemma}{ConditionsSufficient}
Let $\vass$ be a $\VASS(\Z)$. If it  has a witness for $\limavgS{\vec{S}} \leq \thresholds$,
then there exists a computation $\comp$ such that $\limavgS{\vec{S}}(\comp) \leq \thresholds$.
\end{restatable}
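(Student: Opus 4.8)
The plan is to construct the computation $\comp$ explicitly from the witness by interleaving the cycles $\cycleI_i$, using the access paths $\connectA_i, \connectB_i$ to return to the recurring configuration between phases. Concretely, let $\finPath_0$ be a finite path from $(q_0,\vec 0)$ to $(s_0,\vec x)$ (it exists by condition~(a) of the witness). The computation will have the shape
\[
\comp = \finPath_0 \cdot \bigl(\connectA_{i_1}\, (\cycleI_{i_1})^{m_{1}}\, \connectB_{i_1}\bigr)\bigl(\connectA_{i_2}\,(\cycleI_{i_2})^{m_{2}}\,\connectB_{i_2}\bigr)\cdots,
\]
where the indices $i_1, i_2, \ldots$ cycle through $\{1,\ldots,k\}$ in round-robin order and the exponents $m_\ell$ are chosen large enough (this is the ``first I would'' step). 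The key point enabled by condition~\textbf{(BU)} (i.e.\ \textbf{\nameref{cond:bTwo}} and \textbf{\nameref{cond:uTwo}}) is that each block $\connectA_{i_\ell}(\cycleI_{i_\ell})^{m_\ell}\connectB_{i_\ell}$ has zero gain on every bounded counter, so the values of all bounded counters are exactly those of $\vec x$ at every return visit to $s_0$; hence the blocks genuinely compose and we never need strong connectivity.

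Next I would verify the two invariants that make the limit-average bound hold. For a bounded counter $i$: by \textbf{\nameref{cond:bOne}}, the subcomputation of one traversal of $\cycleI_i$ from the configuration reached after $\connectA_i$ has average at most $\thresholds[i]$ on counter $i$, and since all bounded-counter values are pinned to $\vec x$, iterating $(\cycleI_i)^{m}$ for large $m$ makes the partial average over $S[i]$ of the entire prefix ending at that block dominated by $\thresholds[i]$ (the bounded contributions of $\finPath_0$, of the access paths, and of the finitely many earlier blocks are a fixed additive quantity, washed out by taking $m$ large). For an unbounded counter $i$: by \textbf{\nameref{cond:uOne}} the cycle $\cycleI_i$ strictly decreases counter $i$ and contains a selecting state, so iterating it drives the value of counter $i$ arbitrarily low, and therefore the partial average over $S[i]$ can be pushed below $\thresholds[i]$ — exactly the argument from the \emph{all-unbounded} case, now localized to a single block. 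Choosing $m_\ell$ large enough at stage $\ell$ to realize the relevant bound for the counter $i_\ell$ being handled, we obtain that for each counter $i$ there are infinitely many prefix lengths $n$ (namely the ends of blocks handling $i$) with $\avgS{S[i]}^i(\comp[0,n-1]) \le \thresholds[i]$.

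Finally I would conclude: since $\limavgS{S[i]}^i(\comp) = \liminf_{n} \avgS{S[i]}^i(\comp[0,n-1])$ and there are infinitely many $n$ witnessing the bound $\thresholds[i]$, we get $\limavgS{S[i]}^i(\comp) \le \thresholds[i]$ for every $i$, hence $\limavgS{\vec S}(\comp) \le \thresholds$; definedness follows because each $S[i]$ is visited infinitely often (every $\cycleI_i$ contains a selecting state from $S[i]$ by \textbf{\nameref{cond:uOne}}/\textbf{\nameref{cond:bOne}} and is used infinitely often). The main obstacle I anticipate is the bookkeeping in the ``large $m_\ell$'' argument: one must make the choice of exponents precise so that the bound established for counter $i$ at its block is not destroyed by the subsequent blocks for other counters before counter $i$'s next turn. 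The clean way around this is that for a bounded counter the per-visit average is a fixed constant $\le\thresholds[i]$ independent of $m$, so the partial average converges downward toward that constant as $m\to\infty$ regardless of the bounded-counter prefix mass; and for an unbounded counter the counter value itself is made so negative within the block that the average at the block's end is below threshold no matter the (finitely bounded) earlier contribution. Making these two quantitative claims explicit — choosing $m_\ell$ as an appropriate function of the sizes of all earlier blocks and of $\thresholds$ — is the only real work, and it is routine once the structure above is in place.
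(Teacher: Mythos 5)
Your construction is essentially the paper's proof: the same block structure $\finPath_0\,(\connectA_{i}\,\cycleI_{i}^{N}\,\connectB_{i})(\connectA_{i'}\,\cycleI_{i'}^{N'}\,\connectB_{i'})\cdots$, the same use of \textbf{(BU)} to keep the bounded counters pinned to $\vec{x}$ at every return to $s_0$, and the same per-block case split between bounded counters (fixed per-cycle average $\le \thresholds[i]$) and unbounded counters (value driven to $-\infty$). The one imprecision is your claim that block ends achieve $\avgS{S[i]}(\cdot)\le\thresholds[i]$ exactly: when the cycle's average equals $\thresholds[i]$ and the earlier prefix sits above it, finite $N$ only gets you $\le\thresholds[i]+\varepsilon$; the paper handles this by targeting $\thresholds[i]+\frac{1}{j}$ at stage $j$, which still gives $\limavgS{S[i]}(\comp)\le\thresholds[i]$ via the liminf, and your ``choose $m_\ell$ as a function of earlier blocks'' remark is exactly where this shrinking-tolerance fix belongs.
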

\begin{proof}
Let $(q_0,\vec{0})$ be the initial configuration and $(s_0,\vec{x})$ be the recurrent configuration of the witness. 
We assume that $\Bounded$ is non-empty. 
Otherwise, the construction presented in the all-bounded case essentially works. 
The only difference is that we use paths $\connectA_i, \connectB_i$ to switch between cycles.

We define the path $\Path$ of the form
\[
 \Path = \finPath_{0} \finPath^{1}_1 \finPath^{1}_2 \ldots \finPath^{1}_k \finPath^{2}_1 \ldots \finPath^{2}_k \ldots
\]
such that for every prefix $\finPath_{0} \finPath^{1}_1 \ldots \finPath^{j}_i$ of $\Path$,
the precomputation $\compFin^{j}_{i}$ corresponding to that prefix satisfies:
\begin{enumerate}[(a)]
    \item $\avgS{S[i]}(\compFin^j_i) \leq \thresholds[i] + \frac{1}{j}$, and
    \item $\compFin^j_i$ terminates in $(s_0,\vec{y})$, where for every $i \in \Bounded$ we have 
$\vec{y}[i] = \vec{x}[i]$. 
\end{enumerate}
Having such a path $\Path$, consider the computation $\comp$ that corresponds to $\Path$. 
Observe that for every counter $i$, condition (a) implies that for every $\epsilon >0$ there are infinitely many 
positions $k$ such that the average $\avgS{S[i]}(\comp[1,k])$ is less than $\thresholds[i] + \epsilon$ and hence
$\limavgS{S[i]}(\comp) \leq \thresholds[i]$. It follows that $\limavgS{\vec{S}}(\comp) \leq \thresholds$.

Now, we discuss how to construct such $\Path$. 
First, $\finPath_{0}$ is a path that corresponds to a computation from $(q_0,\vec{0})$ to $(s_0,\vec{x})$.
Second, suppose that a prefix $\Path_1$ of $\Path$ has been defined as above, and we need to construct $\finPath^{j}_i$.
Observe that the already constructed subcomputation ends in $(s_0,\vec{y})$ such that for every $k \in \Bounded$ we have $\vec{y}[i] = \vec{x}[i]$. 

There exist paths $\connectA_i, \connectB_i$ and a cycle $\cycleI_i$ satisfying~\textbf{(BU)}, and 
~\textbf{\nameref{cond:uOne}} (if $i \in \UnBounded$) or~\textbf{\nameref{cond:bOne}} (if $i \in \Bounded$).
Consider $\finPath^{j}_i$ of the form $\connectA_i \cycleI_i^N \connectB_i$ from some $N>0$ fixed later. 
First, due to condition~\textbf{(BU)}, for every $k \in \Bounded$ we have $\gain(\connectA_i \connectB_i) = 0$ and $\gain(\cycleI_i) = 0$, and hence
  $\gain(\finPath^{j}_i) = 0$. It follows that \textbf{(b)} holds.

Now, to see that \textbf{(a)} holds for $N$ big enough we consider two cases.
If counter $i$ is bounded, then~\textbf{\nameref{cond:bOne}} implies that for $\compFin_N$ being the computation corresponding to 
$\Path_1 \connectA_i \cycleI_i^N \connectB_i$, the average of the part corresponding to $\cycleI_i^N$ is $x$, which is less or equal to $\thresholds[i]$.
Therefore, $\avgS{S[i]}(\compFin_N)$ tends to $x$ as $N \to \infty$ and hence there is $N$ such that  $\avgS{S[i]}(\compFin_N) \leq \thresholds[i] + \frac{1}{j}$.

If counter $i$ is unbounded, then $\gain(\cycleI_i)[i] < 0$ and $\cycleI_i$ contains a selecting state. It follows that the values of counter $i$ tend to $-\infty$, and hence $\avgS{S[i]}(\compFin_N)$ tends to $-\infty$ as $N \to \infty$. 
Therefore, there exists $N$ such that  $\avgS{S[i]}(\compFin_N) \leq \thresholds[i] + \frac{1}{j}$.
\end{proof}

We show that the existence of a witness for $\limavgS{\vec{S}} \leq \thresholds$ is necessary for the existence 
of a computation $\comp$ with $\limavgS{\vec{S}}(\comp) \leq \thresholds$. 

\begin{restatable}{lemma}{ConditionsNecessary}
For all $\VASS(\Z,k)$ $\vass$ and $\thresholds \in \Q^k$ the following holds: 
if there is a computation $\comp$ such that $\limavgS{\vec{S}}(\comp) \leq \thresholds$, 
then there exists a witness for $\limavgS{\vec{S}} \leq \thresholds$, which has a polynomial size in $|\vass| + |\thresholds|$.
\end{restatable}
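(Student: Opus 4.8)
The plan is to start from a computation $\comp$ with $\limavgS{\vec{S}}(\comp)\le\thresholds$ and read the witness off it. First I pass to the recurrent part: let $I\subseteq Q$ be the states visited infinitely often in $\comp$, keep only the transitions used infinitely often, and let $\vass_I$ be the resulting sub-VASS; it is strongly connected, and some suffix $\comp'$ of $\comp$ stays inside $\vass_I$ and uses every transition of $\vass_I$ infinitely often. Since a suffix does not change limit-averages, I work with $\comp'$. I then split the counters: put $i\in\UnBounded$ if the value of counter $i$ is not bounded along $\comp'$ (i.e.\ $\liminf_n\comp'[n][i]=-\infty$ or $\limsup_n\comp'[n][i]=+\infty$), and $i\in\Bounded$ otherwise; for $i\in\Bounded$ the value of counter $i$ then ranges over a finite set along $\comp'$ (hence along all of $\comp$). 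By pigeonhole on the pair (state, restriction of the counter valuation to $\Bounded$), some pair $(s_0,\vec{v})$ occurs infinitely often in $\comp'$; a configuration $(s_0,\vec{x})$ of $\comp'$ at such a position, with $\vec{x}$ restricted to $\Bounded$ equal to $\vec{v}$, together with the chosen partition, will be the recurrent configuration of the witness, and it is reachable from $(q_0,\vec{0})$.

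Next I extract, for each counter $i$, a cycle $\cycleI_i$ and access paths $\connectA_i,\connectB_i$ from $\comp'$. Between two occurrences of the pigeonholed pair, $\comp'$ contains a factor that is a cycle at $s_0$ with zero gain on every counter in $\Bounded$; with a further pigeonhole I can require such a factor to start and end at any prescribed recurring state with a prescribed $\Bounded$-valuation, which also yields the access paths $\connectA_i,\connectB_i$ between $s_0$ and a state of $\cycleI_i$ with $\gain(\connectA_i\connectB_i)[j]=0$ for $j\in\Bounded$, i.e.\ condition \textbf{(BU)}. For $i\in\UnBounded$: since counter $i$ is unbounded below along $\comp'$ (and, if it is only unbounded above, $\limavgS{S[i]}^i(\comp)\le\thresholds[i]$ forces it to return arbitrarily low infinitely often), there are two positions of $\comp'$ with the same state and the same $\Bounded$-valuation between which counter $i$ strictly decreases and which span an $S[i]$-state; the corresponding factor is a cycle with $\gain[i]<0$, $\gain[j]=0$ for all $j\in\Bounded$, and containing a selecting state, i.e.\ \textbf{(U1)}. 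For $i\in\Bounded$: from $\liminf_n\avgS{S[i]}^i(\comp[0,n-1])\le\thresholds[i]$, and using that counter $i$ is bounded so any fixed prefix contributes a bounded sum that is negligible against a growing suffix, I obtain, for every $\varepsilon>0$, arbitrarily long factors of $\comp'$ with matching endpoint state and $\Bounded$-valuation whose average of counter $i$ over $S[i]$ is at most $\thresholds[i]+\varepsilon$; these are cycles neutral on $\Bounded$ containing an $S[i]$-state, and passing to the limit (using rationality/attainment of the optimal cycle average) gives a cycle with $\avgS{S[i]}^i\le\thresholds[i]$ when run from the configuration that $\connectA_i$ reaches out of $(s_0,\vec{x})$, i.e.\ \textbf{(B1)}.

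Finally I make every object polynomial in $|\vass|+|\thresholds|$. All the extracted cycles and paths, as well as the recurrent configuration and a path realising reachability of $(s_0,\vec{x})$ from $(q_0,\vec{0})$, are constrained only by linear conditions over the non-negative integers: flow conservation in $\vass_I$, source/target conditions, $\gain[j]=0$ for $j\in\Bounded$, $\gain[i]<0$ or a bound on the accumulated value of counter $i$ over its $S[i]$-configurations for \textbf{(B1)}, and the net gain equalling $\vec{x}$ for the reachability path. Assembling these into one feasibility system (one instance per partition, finitely many), feasibility is certified by $\comp'$, and standard small-solution bounds for integer linear systems give a solution of polynomial magnitude; a balanced non-negative integer flow on a connected support is realised as a closed walk by an Eulerian-type argument (with source $\ne$ target giving an Eulerian trail), after connecting support components by short $\Bounded$-neutral paths inside $\vass_I$. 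This produces polynomial-size $\cycleI_i,\connectA_i,\connectB_i$, recurrent configuration, and reachability path satisfying \allConditions, i.e.\ a witness of the required size.

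The main obstacle is two-fold. First, the bounded/unbounded split must be exactly calibrated: a counter whose value is unbounded below along $\comp$ cannot lie in $\Bounded$, since no value of it recurs and there is then no recurrent configuration; a counter bounded below but unbounded above must be shown, via $\limavgS{S[i]}^i(\comp)\le\thresholds[i]$, to return low infinitely often, and one must verify that the resulting decreasing cycle can be taken neutral on $\Bounded$ and routed through a selecting state. This delicate pigeonholing is the crux of the combinatorial part. Second, condition \textbf{(B1)} is about the average of the counter's values, not the net gain, and this quantity depends on the positional profile of the cycle rather than on its Parikh image alone; turning ``there is a long good factor'' into ``there is a polynomial-size good cycle'' therefore requires the flow/linear-programming encoding to carry an extra coordinate accounting for the accumulated value (the ``area''), which is the technically heaviest step.
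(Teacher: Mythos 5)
There is a genuine gap, and it sits exactly where you flagged "the technically heaviest step": your bounded/unbounded split is not the right one, and with it condition \textbf{(B1)} is not secured. You put a counter in $\Bounded$ whenever its \emph{values} are bounded along $\comp'$. The paper instead puts $j$ in $\Bounded$ only when \emph{every} infinitely-occurring simple cycle of the greedy decomposition has $\gain[j]=0$. These classes differ: a counter whose values stay in a finite range while cycles of gain $+1$ and $-1$ alternate infinitely often is $\Bounded$ for you but unbounded for the paper. For such a counter the value is \emph{not} determined by the state in the tail, so a cycle does not have a well-defined average; your argument produces, for every $\varepsilon>0$, a factor with matching (state, $\Bounded$-valuation) endpoints and average at most $\thresholds[i]+\varepsilon$, but the step from there to a single cycle $\cycleI_i$ with average $\leq\thresholds[i]$ (the inequality in \textbf{(B1)} is non-strict) is only asserted via "rationality/attainment of the optimal cycle average". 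Attainment is not justified: the infimum over neutral anchored cycles may be approached by longer and longer cycles without being attained by any one of them, and nothing in your argument rules this out, let alone yields a polynomial-size optimizer. The concluding flow/ILP step cannot repair this, because (as you yourself note) the quantity in \textbf{(B1)} is the accumulated value of the counter over the $S[i]$-positions of the cycle, which is not a linear function of the Parikh image of the cycle, and the feasibility certificate you extract from $\comp'$ only certifies the $\varepsilon$-relaxed constraints, not the exact one.

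The paper's definition of $\Bounded$ is what makes this case easy: once all infinitely-occurring simple cycles are neutral on $j$, past some position the value of counter $j$ is a function of the state, so each of the finitely many cycles in $D$ has a uniquely defined average over $S[i]$, and if all of them exceeded $\thresholds[i]$ the limit average of $\comp$ would exceed $\thresholds[i]$; hence some $\cycleI_i\in D$ of length at most $2\cdot|\vass|$ attains $\avgS{S[i]}\leq\thresholds[i]$ exactly. Counters like the alternating one above land in $\UnBounded$, where only the much weaker condition \textbf{(U1)} (a negative-gain cycle through a selecting state, neutral on $\Bounded$) is needed, and that your extraction does provide. Separately, the polynomial bound on the recurrent configuration is obtained in the paper by adjusting $\vec{x}$ to a vector $\vec{z}$ (bounded coordinates sufficiently small so that \textbf{(B1)} still holds, since the relevant paths have length $O(|\vass|)$) and invoking semilinearity of $\VASS(\Z)$ reachability; your plan folds this into the same unresolved feasibility system. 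So the skeleton (read the witness off $\comp$, pigeonhole a recurrent configuration, extract cycles and access paths, then shrink) matches the paper, but with your partition the bounded-counter case is not proved, and fixing it essentially forces you back to the paper's cycle-gain-based definition of $\Bounded$.
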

\begin{proof}
Consider a computation $\comp$ such that $\limavgS{\vec{S}}(\comp) \leq \thresholds$ and 
let $\Path$ be the infinite path corresponding to $\comp$.
We decompose $\Path$ into simple cycles greedily always picking the first occurring simple cycle. 
Now, consider all simple cycles that occur infinitely often as well as all rotations of these cycles $D = \set{\cycleD_1, \ldots, \cycleD_m}$. 
Based on these cycles, we define $\Bounded$ as the set of counters $j$ such that for all cycles $\cycleD \in D$ we have
$\gain(\cycleD)[j] = 0$, and $\UnBounded= \{1,\ldots,k\} \setminus \Bounded$.

Let $s_0$ be a state that occurs infinitely often in $\Path$. 
Note that eventually, past some position $K$, all transitions belong to cycles from $D$. Therefore, 
we pick the first configuration $(s_0, \vec{x})$ past position $K$ and observe that 
for all successive configurations $(s_0, \vec{y})$, for every counter $i \in \Bounded$, 
the gain between these configurations is $0$ and hence $\vec{x}[i] = \vec{y}[i]$. 
The length of description of $(s_0, \vec{x})$ is unbounded, but we show at the end of the proof that
it can be chosen to be  polynomial in $|\vass|+|\thresholds|$.
First, we show that there is any witness.

Consider a counter $j$ such that all cycles $\cycleD \in D$ satisfy $\gain(\cycleD)[j] \geq 0$.
We observe that for all cycles $\cycleD \in D$ we have $\gain(\cycleD)[j] = 0$ and hence $j \in \Bounded$.
Indeed, if there is a cycle $\cycleD \in D$ that satisfies $\gain(\cycleD)[j] > 0$, then 
values of counter $j$ in $\comp$ tend to $\infty$ and hence $\limavgS{S[j]}(\comp) = \infty > \thresholds[j]$.
It follows that for ever $i \in \UnBounded$, there is a cycle $\cycleI_i$ such that $\gain(\cycleI_i)[i] < 0$.

\subparagraph{The unbounded-counter case.} Consider  $i \in \UnBounded$. 
Let $\tilde{\cycleI_i} \in D$ be such that $\gain(\tilde{\cycleI_i})[i] < 0$ and let $q$ be the first state of $\tilde{\cycleI_i}$. 
Observe that there exist cycles $\cycleF_1, \cycleF_2 \in D$ such that 
$\cycleF_1$ is from $s_0$ to itself and contains $q$, and 
$\cycleF_2$ is from $q$ to itself and  contains some selecting state from $S_i$.
Indeed, $q$ and $s_0$ occur infinite often in $\Path$. 
Consider disjoint cycles $\cycleE_1, \cycleE_2, \ldots$ each from $s_0$ to itself that contains $q$.
For each $\cycleE_l$ we remove from it iteratively simple cycles from $D$ such that the resulting $\cycleE_l'$ does not contain any simple cycle from $D$.
Observe that only finitely many $\cycleE_l'$ are non-empty as otherwise there would be another simple cycle that occurs infinitely often and does not belong to $D$.
Now, let $\cycleE_l$ be a cycle that can be decomposed into simple cycles from $D$. Let us remove iteratively simple cycles to leave the ends $s_0$ of $\cycleE_l$ and 
a single occurrence of $q$. The resulting cycle consists of one or two simple cycles from $D$.
The proof for $\cycleF_2$ is similar.

Now, for  $N = |\gain(\cycleF_2)[i]|+1$ we define
$\cycleI_i = \cycleF_2 \tilde{\cycleI_i}^N$. Then, $\gain(\cycleI_i^N )[i] < 0$ and $\cycleI_i$ contains a selecting state from $S[i]$.
Therefore, condition~\textbf{\nameref{cond:uOne}} holds.

The cycle $\cycleF_1$ can be decomposed into paths $\connectA_i, \connectB_i$, respectively  from $s_0$ to $q$, and from $q$ to $s_0$.
Furthermore, since $\cycleF_2, \cycleI_i, \connectA_i \connectB_i$ can be decomposed into cycles from $D$, 
then by definition of $\Bounded$, for all $j \in \Bounded$
we have $\gain(\cycleF_2)[j] = \gain(\cycleI_i)[j] = \gain(\connectA_i \connectB_i)[j] =0$ and hence $\gain(\cycleF_2 \cycleI_i^n)[j] = 0$.
Therefore, condition~\textbf{(BU)} holds.
Note that $\connectA_i, \connectB_i$ have the lengths bounded by $2 \cdot |\vass|$ and $\cycleI_i$ can be represented by the pair of cycles: 
$(\tilde{\cycleI_i}, \cycleF_2)$ of the length at most $2 \cdot |\vass|$.
Thus, all have polynomial-size representation.

\subparagraph*{The bounded-counter case.} Let  $i \in \Bounded$. Since every cycle $\cycleD \in D$ satisfies $\gain(\cycleD)[i] = 0$, from some position $K$ 
onwards the gain of each cycle is $0$ and hence the value of counter $i$ on any two positions past $K$ with the same state are the same.
Therefore, we associate with each state the value of counter $i$ and eliminate values of counters. 
Furthermore, we associate with each cycle $\cycleD \in D$ its average value over $S[i]$, which is uniquely defined. 
Finally, if for all cycles $\cycleD \in D$ the average exceeds $\thresholds[i]$, then 
$\limavgS{S[i]}(\comp) > \thresholds[i]$. 
Therefore, there exists a cycle $\cycleI_i \in D$ with the average value less or equal to $\thresholds[i]$.

Since $s_0$ occurs infinitely often, in particular it occurs past position $K$.
Therefore, we show as in the unbounded-counter case that
there exist $\connectA_i, \connectB_i$ such that $\connectA_i$ leads from $s_0$ to some state $q$ of $\cycleI_i$ and $\connectB_i$ from $q$ to $s_0$ such that
$\cycleI_i, \connectA_i, \connectB_i$ satisfy~\textbf{(BU)}. Finally, observe that
 $\cycleI_i, \connectA_i, \connectB_i$ satisfy~\textbf{\nameref{cond:uOne}}.  
Note that $\cycleI_i, \connectA_i, \connectB_i$ can be picked to have the lengths  at most $2\cdot |\vass|$.

\subparagraph*{A witness with polynomial recurrent configuration.}
We have shown that there exists a witness for $\limavgS{\vec{S}} \leq \thresholds$ with $(s_0,\vec{x})$. 
We show that there exists $\vec{z}$ such that 
(a)~the witness for $\limavgS{\vec{S}} \leq \thresholds$ with $(s_0,\vec{x})$ replaced by $(s_0,\vec{z})$ 
remains a witness for $\limavgS{\vec{S}} \leq \thresholds$, and
(b)~$\vec{z}$ has the binary representation of polynomial length in $|\vass| + |\thresholds|$.

For (a) we need to show that (i)~\textbf{\nameref{cond:bOne}} is satisfied with $(s_0, \vec{z})$, and 
(ii) $(s_0,\vec{z})$ is reachable from $(q_0, \vec{0})$.
Recall that \textbf{\nameref{cond:bOne}} states that for every $i \in \Bounded$, 
the subcomputation $\compFin$ corresponding to the cycle $\cycleI_i$ starting from the configuration reached 
from $(s_0,\vec{z})$ over the path $\connectA_i$ satisfies  $\avgS{S[i]}(\compFin) \leq \thresholds[i]$.
Note that the lengths of $\connectA_i, \cycleI_i$ are bounded by $2\cdot |\vass|$ (because $i \in \Bounded$) and hence there exists $\alpha_i$ with the binary representation of 
polynomial-length in $|\vass|$ such that $\avgS{S[i]}(\compFin) = \alpha_i + \vec{z}[i]$. 
Therefore, any $\vec{z}$ such that $\vec{z}[i] < -\alpha_i + \thresholds[i]$ for all $i \in \Bounded$ satisfies (i). 
For (ii) observe that reachable configurations in $\VASS(\Z)$ are semilinear sets~\cite{DBLP:conf/lics/BlondinFGHM15} 
represented by polynomial-size equations (where coefficients are given in binary). 
Therefore, we can find a vector  $\vec{z}$ satisfying (i) and (ii)
whose binary representation has polynomial length in $|\vass| + |\thresholds|$.
\end{proof}

Finally, a polynomial-size witness for  $\limavgS{\vec{S}}(\comp) \leq \thresholds$ can be non-deterministically picked and verified in polynomial time.
More precisely, in the definition of a witness for $\limavgS{\vec{S}} \leq \thresholds$ condition (a)~can be checked in $\NP$ as reachability for $\VASS(\Z)$ is $\NP$-complete~\cite{DBLP:conf/lics/BlondinFGHM15}, and conditions (b) and (c) can be check in polynomial time. In consequence, we have:

\begin{lemma}
The multi-dimensional average problem for $\VASS(\Z)$ is in \NP. 
\end{lemma}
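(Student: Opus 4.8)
The plan is to derive from the two preceding lemmas the exact characterization: there is a computation $\comp$ with $\limavgS{\vec{S}}(\comp) \leq \thresholds$ if and only if $\vass$ has a witness for $\limavgS{\vec{S}} \leq \thresholds$. The ``if'' direction is the sufficiency lemma (which, thanks to condition~\textbf{(BU)}, needs no strong-connectivity assumption), and the ``only if'' direction is the necessity lemma, which moreover guarantees that whenever a witness exists there is one whose description is of size polynomial in $|\vass| + |\thresholds|$. So the \NP\ procedure guesses such a polynomial-size object---a recurring configuration $(s_0,\vec{x})$, a partition of the counters into $\Bounded$ and $\UnBounded$, and for every counter $i$ a cycle $\cycleI_i$ with access paths $\connectA_i,\connectB_i$---and then verifies that it really is a witness, i.e., that \allConditions\ all hold.

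The bulk of the argument is the verification step, which I would split according to the three components of the witness definition. First, the guessed data must be checked to have the promised polynomial size: by the necessity lemma the access paths and the bounded-counter cycles have length at most $2\cdot|\vass|$, an unbounded-counter cycle $\cycleI_i$ is stored compactly as a pair $(\tilde{\cycleI_i},\cycleF_2)$ of paths of length at most $2\cdot|\vass|$ together with an exponent $N = |\gain(\cycleF_2)[i]|+1$ of polynomially many bits, and $\vec{x}$ has polynomial binary length. Second, conditions~\textbf{(U1)}, \textbf{(B1)} and~\textbf{(BU)} only involve: the vectors $\gain(\cdot)$ of the guessed cycles and paths, which are sums of the explicitly given (binary-encoded) counter updates and hence computable in polynomial time; a syntactic test of whether a cycle contains a selecting state from $S[i]$; and, for bounded $i$, the average $\avgS{S[i]}$ of the polynomial-length subcomputation obtained by following $\connectA_i$ and then $\cycleI_i$ from $(s_0,\vec{x})$, which is a quotient of polynomially-sized integers. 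Hence conditions (b) and (c) in the witness definition are decidable in polynomial time.

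The remaining condition (a), that $(s_0,\vec{x})$ is reachable from $(q_0,\vec{0})$, is a reachability query for $\VASS(\Z)$, and this is exactly where I would invoke $\NP$-completeness of $\VASS(\Z)$ reachability~\cite{DBLP:conf/lics/BlondinFGHM15}: either call it as an \NP\ oracle, or fold its polynomial-size certificate into the initial guess and verify it directly. Putting the three checks together, the whole algorithm is nondeterministic polynomial time. I expect the only genuinely delicate point to be the size of the recurring configuration: an arbitrary witnessing run may pass only through configurations with astronomically large counter values, and it is the final part of the necessity lemma---replacing $\vec{x}$ by a short $\vec{z}$ using semilinearity of the $\VASS(\Z)$ reachability relation~\cite{DBLP:conf/lics/BlondinFGHM15}---that makes the guess polynomially bounded; everything else is routine bookkeeping about cycle gains and averages.
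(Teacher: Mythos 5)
Your proposal is correct and follows essentially the same route as the paper: guess the polynomial-size witness guaranteed by the necessity lemma, verify conditions \textbf{(U1)}, \textbf{(B1)}, \textbf{(BU)} by direct polynomial-time computation of gains and averages, and discharge the reachability of the recurring configuration $(s_0,\vec{x})$ via the \NP{} reachability certificate for $\VASS(\Z)$~\cite{DBLP:conf/lics/BlondinFGHM15}, with the sufficiency lemma closing the equivalence. Your remark that the polynomial bound on the recurring configuration (via semilinearity) is the only delicate point matches the paper's treatment exactly.
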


For hardness of the multi-dimensional average problem, consider configuration-reachability  for $\VASS(\Z)$, which is $\NP$-complete.
Configuration-reachability is mutually reducible to coverability for $\VASS(\Z)$~\cite{haase2014integer}, which in turn is equivalent to dual coverability, i.e, 
the problem, given a $\VASS(\Z)$ and two configurations $(s,\vec{0})$ and $(t, \vec{x})$, decide whether 
there is a (finite) subcomputation from $(s,\vec{0})$  to some $(t, \vec{y})$, where $\vec{y} \leq \vec{x}$.
The dual coverability straightforwardly reduces to the multi-dimensional average problem as follows. 
We construct $\vass'$ from  $\vass$ by adding a fresh state $t^*$ and two transitions labeled with $\vec{0}$:
from $t$ to $t^*$ and a self-loop over $t^*$.
Observe that there is a subcomputation from $(s,\vec{0})$  to $(t, \vec{y})$ where $\vec{y} \leq \vec{x}$ in $\vass$ if and only if
there is a computation from $(s,\vec{0})$ that eventually reaches $t^*$ and the multi-dimensional limit-averages 
are bounded by $\vec{x}$. To enforce that a computation eventually reaches $t^*$, we use an additional counter that is $0$ in the configurations of 
$\vass$ and it changes to $-1$ upon moving to $t^*$. Requirement that the limit average of this counter is less or equal to $-1$ forces the computation to move to $t^*$.
In consequence, the dual coverability for $\VASS(\Z)$ reduces to the multi-dimensional average problem 
for $\VASS(\Z)$ and hence the latter problem is $\NP$-complete.

\begin{theorem}
The multi-dimensional average problem for $\VASS(\Z)$ is \NP-complete. 
\end{theorem}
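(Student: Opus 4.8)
The plan is to assemble the two halves developed above into matching upper and lower bounds. For the upper bound I would invoke the witness characterization: the sufficiency lemma shows that a witness for $\limavgS{\vec{S}} \leq \thresholds$ yields a computation $\comp$ with $\limavgS{\vec{S}}(\comp) \leq \thresholds$, and the necessity lemma shows the converse together with the fact that whenever such a computation exists there is a witness of size polynomial in $|\vass| + |\thresholds|$. An \NP{} algorithm therefore guesses a witness --- a recurring configuration $(s_0,\vec{x})$ with $\vec{x}$ of polynomially many bits, a partition of the counters into $\Bounded$ and $\UnBounded$, and, for every counter $i$, a cycle $\cycleI_i$ with access paths $\connectA_i,\connectB_i$ --- and then verifies conditions \allConditions{}. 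Condition~(a) is a configuration-reachability query for $\VASS(\Z)$, which lies in \NP{} by~\cite{DBLP:conf/lics/BlondinFGHM15}, so it folds into the same nondeterministic guess; conditions (b) and (c) only involve computing $\gain$ of the guessed polynomial-length paths and cycles and comparing rational averages against $\thresholds$, which is polynomial time. This is exactly the argument recorded in the lemma stating membership in \NP{}.

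For the lower bound I would reduce from configuration-reachability for $\VASS(\Z)$, which is \NP-hard; it is convenient to pass through the equivalent dual-coverability problem (the two are inter-reducible via coverability~\cite{haase2014integer}). Given a $\VASS(\Z)$ $\vass$ with source $(s,\vec{0})$ and target bound $(t,\vec{x})$, build $\vass'$ by adding a new state $t^*$, a transition from $t$ to $t^*$, and a self-loop on $t^*$, all labeled $\vec{0}$ on the original counters, plus one extra ``progress'' counter kept at $0$ everywhere inside the copy of $\vass$ and driven to $-1$ on the edge into $t^*$ and on the self-loop. Taking $S[i] = \{t^*\}$ for each original counter $i$ with threshold $\vec{x}[i]$, and requiring limit-average $\leq -1$ for the progress counter, forces any witnessing computation to stay in $t^*$ forever (otherwise the progress counter has limit-average $0 > -1$); from that point the original counter values are frozen at the value $\vec{y}$ reached when $t$ was last visited, so the limit-averages over $\{t^*\}$ equal $\vec{y}$ componentwise, and the constructed instance is positive iff some $\vec{y} \leq \vec{x}$ is reachable at $t$ in $\vass$ --- i.e., iff the dual-coverability instance is positive. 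This is a polynomial-time many-one reduction, establishing \NP-hardness.

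Both directions together give \NP-completeness. The only real subtlety in the whole argument --- and it is discharged inside the necessity lemma rather than here --- is bounding the description of the recurring configuration $(s_0,\vec{x})$: a priori $\vec{x}$ may be astronomically large, but since the reachable configurations of a $\VASS(\Z)$ form a semilinear set describable by polynomial-size linear equations, and since condition~\textbf{\nameref{cond:bOne}} constrains each bounded coordinate only by a strict linear inequality with polynomial-size coefficients, one can replace $\vec{x}$ by a representative $\vec{z}$ with a polynomial-length binary encoding. Given that small-witness property, the proof of the theorem is a routine combination of the membership lemma and the hardness reduction, so I expect no further obstacle.
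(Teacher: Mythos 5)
Your proposal is correct and follows essentially the same route as the paper: \NP{} membership via guessing and verifying the polynomial-size witness from the sufficiency/necessity lemmas (with the reachability check delegated to \NP{} reachability for $\VASS(\Z)$), and \NP-hardness by reducing configuration reachability through dual coverability using the fresh sink state $t^*$ with an auxiliary counter whose limit-average threshold $-1$ forces the computation to reach $t^*$. The minor variation in how the auxiliary counter is updated on the self-loop does not affect correctness.
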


\subsection{The expected average problem}
\label{s:expectedZonSCC}

Observe that the expected average problem for probabilistic $\VASS(\Z)$ is modular and each
dimension can be considered separately.
This follows from the fact that each path in a $\VASS(\Z)$ corresponds to a computation, which is not the case for
$\VASS(\N)$. 
Furthermore, in this problem we compute the expected value over all computations and hence it can be considered for each dimension separately.
Therefore, we consider VASS that are  single-dimensional. 
We first discuss the strongly-connected case and then generalize our results to all $\VASS(\Z)$.

\subsubsection{The strongly-connected case}
Let $\vass$ be a single-dimensional probabilistic $\VASS(\Z,1)$, which is strongly connected as a labeled graph. 
We additionally assume that it has a single initial configuration $(q_0, 0)$. The case of any initial distribution follows easily.
First, we define the expected gain of $\vass$, which corresponds to the expected trend of the counter.

\subparagraph*{The expected gain.} The graph of $\vass$ can be considered as a Markov chain and using standard methods we compute for each state $q$ its long-run frequency $x_q$~\cite{BaierBook,filar}. 
More precisely, the \emph{frequency} of $q$ in a subcomputation $\randComp[1,n]$ is the number of configurations with the state $q$ in $\randComp[1,n]$ divided by $n$. The Ergodic Theorem for Markov chains implies that with probability $1$ over a random computation $\randComp$,
for every state $q$, the frequency of $q$ in $\randComp[1,n]$ converges to $x_q$ as $n$ tends to infinity.
Based on frequencies  $x_q$ we define \emph{the expected gain} $\expectedGain$ as the expected counter update provided that 
the initial state $q$ is picked at random according to the frequencies $x_q$ and the outgoing transition is picked at random according to 
the distribution at $q$, that is:
\[
\expectedGain = \sum_{(q,q',y) \in \delta} x_q \cdot P(q,q',y) \cdot y
\]

\subparagraph*{The classification based on $\expectedGain$.} 
We show that if $\expectedGain$ is positive (resp., negative), then the limit-average is infinite (resp. minus infinity).
However, if the expected gain is zero there are two cases based on boundedness of configurations.
Either the gain of every cycle is actually zero, or 
cycles with a positive gain balance cycles with a negative gain so that the expected gain is zero. We discuss these cases below.

We say that a $\VASS(\Z,1)$ is \emph{totally bounded} if the gain of each cycle is zero. 
This property does not depend on the probability distribution over transitions and 
we extend it straightforwardly to probabilistic $\VASS(\Z,1)$.
Note that if a $\VASS(\Z,1)$ is strongly connected and totally bounded, then in each reachable configuration, the state uniquely determines the counter's value. 
Otherwise, there exists a cycle with a non-zero gain. 
This observation allows us to reduce the expected limit-average problem for such VASS to
computing the expected long-run reward for Markov chains~\cite[Chapter 10.5]{BaierBook}. 

Consider a $\VASS(\Z,1)$ with $\expectedGain$ being zero and at least one cycle with a non-zero gain. 
Observe that $\expectedGain$ being $0$ implies that there is at least on cycle with a positive gain and a cycle with a negative gain. 
Let us consider the simplest probabilistic VASS, which has a single state $q_0$ and two self loops labeled with $1$ and $-1$, both with probability $0.5$.  
The distribution of the counter's gain in $n$ transitions, denoted $S_n$, is related to the binomial distribution $B(n,0.5)$ in the following way: $S_n \sim 2\cdot B(n,0.5)-n$. 
It follows that with probability $1$ over a random computation $\randComp$, the counter in $\randComp$ is neither lower nor upper bounded.
Furthermore, we show that with probability $1$, a random computation $\randComp$ has two subsequences such that the averages 
on one sequence tend to $\infty$, and on the other tend to $-\infty$. To state this formally we define:
\[
\begin{split}
\limavginfS{{S}}(\comp) = \liminf_{k \to \infty} \avgS{S}(\compFin[1,k]) \\
\limavgsupS{{S}}(\comp) = \limsup_{k \to \infty} \avgS{S}(\compFin[1,k]) 
\end{split}
\]

Now, we present the lemma summarizing the above discussion.

\begin{restatable}{lemma}{Classification}
\label{l:probabilisticClassification}
Let $\vass$ be a strongly-connected probabilistic $\VASS(\Z,1)$.
One of the following conditions holds: 
\begin{enumerate}[(1)]
\item $\expectedGain > 0$, and $\limavginfS{S}(\randComp) = \limavgsupS{S}(\randComp) = \infty$ with probability $1$ (over $\randComp$),
\item $\expectedGain < 0$, and $\limavginfS{S}(\randComp) = \limavgsupS{S}(\randComp) = -\infty$ with probability $1$,
\item $\vass$ is totally bounded, and for some $x \in \Q$, with probability $1$ over $\randComp$ we have 
$\limavginfS{S}(\randComp) = \limavgsupS{S}(\randComp) = x$, and
\item $\expectedGain = 0$, $\vass$ is not totally bounded, and 
with probability $1$ over $\randComp$ we have 
$\limavginfS{S}(\randComp) = -\infty$ and $\limavgsupS{S}(\randComp) = \infty$.
\end{enumerate}
\end{restatable}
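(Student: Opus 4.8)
The plan is to separate the behaviour of the finite Markov chain underlying $\vass$ from the behaviour of the counter. Write $c_n$ for the counter value after $n$ steps (the sum of the first $n$ counter updates), and let $N_k$ be the number of configurations with state in $S$ among the first $k$; since $\vass$ is finite and strongly connected, the stationary frequencies $x_q$ are all positive, and by the Ergodic Theorem for Markov chains we have, with probability $1$, convergence of empirical state frequencies to the $x_q$, hence $N_k\sim\rho k$ with $\rho=\sum_{q\in S}x_q>0$ (we assume $S\neq\emptyset$; then all quantities below are a.s.\ defined), and, applying the same theorem to the counter updates as rewards, $\frac1n c_n\to\expectedGain$ a.s. Cases~(1) and~(2) follow at once: if $\expectedGain>0$ then $c_n\to+\infty$ linearly, so the bounded contribution of any finite initial segment is washed out of $\avgS{S}(\randComp[1,k])$ and $\avgS{S}(\randComp[1,k])\to+\infty$; case~(2) is symmetric. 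For case~(3), total boundedness together with strong connectivity forces the counter value to be a function $f\colon Q\to\Z$ of the current state with $f(q_0)=0$ --- two paths between the same pair of states have equal $\gain(\cdot)$, since completing both with a common return path yields two zero-gain cycles --- so $\avgS{S}(\randComp[1,k])$ is an empirical average of $f$ over the selecting visits and converges a.s.\ to the rational number $x=(\sum_{q\in S}x_qf(q))/(\sum_{q\in S}x_q)$.

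The substance of the lemma is case~(4): $\expectedGain=0$ and $\vass$ not totally bounded. Fix a state $q_0$, let $T_1<T_2<\cdots$ be its successive visit times and put $W_m=c_{T_m}$. By the strong Markov property the excursion gains $G_m=W_m-W_{m-1}$ are i.i.d.; they have all moments finite (excursion lengths have exponential tails in a finite irreducible chain and counter updates are bounded); they are non-degenerate, because strong connectivity lets us route a cycle of nonzero gain through $q_0$, producing two $q_0$-excursions of different gain; and $\expected G_m=0$, since $\frac1M W_M=\frac{c_{T_M}}{T_M}\cdot\frac{T_M}{M}\to\expectedGain/x_{q_0}=0$ while $\frac1M W_M\to\expected G_m$ by the strong law. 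Hence $W_m=\sum_{i\le m}G_i$ is a centred, finite-variance random walk on $\Z$; by the Chung--Fuchs theorem it is recurrent, so $\limsup_m W_m=+\infty$ and $\liminf_m W_m=-\infty$ almost surely, fluctuating at scale $\sqrt m$.

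It remains to push this oscillation of the counter down to the running averages. Decomposing the sum of selecting counter values up to time $T_M$ over excursions, the $m$-th excursion contributes $W_{m-1}L_m+D_m'$, where $L_m$ is the number of selecting visits in it, $(L_m,D_m')$ depends only on that excursion and is hence i.i.d., and $\expected L_m=\ell>0$; so the total equals $\ell\sum_{m\le M}W_{m-1}+O(M^{1+\varepsilon})$ a.s., the error consisting of the martingale $\sum_{m\le M}W_{m-1}(L_m-\ell)$ (of $L^2$-size $O(M)$) and the i.i.d.\ sum $\sum_{m\le M}D_m'$. Since $N_{T_M}\sim\rho T_M$ and $T_M\sim M/x_{q_0}$, this gives $\avgS{S}(\randComp[1,T_M])=c_0\cdot\frac1M\sum_{m\le M}W_{m-1}+O(M^{\varepsilon})$ for a constant $c_0>0$, so it suffices to prove $\limsup_M\frac1M\sum_{m\le M}W_{m-1}=+\infty$ and the symmetric $\liminf=-\infty$ almost surely, and for that it suffices to prove $\limsup_M M^{-3/2}\sum_{m\le M}W_{m-1}=+\infty$ (because $\frac1M\sum_{m\le M}W_{m-1}=\sqrt M\cdot M^{-3/2}\sum_{m\le M}W_{m-1}$). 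Now $M^{-3/2}\sum_{m\le M}W_{m-1}=M^{-3/2}\sum_{i\le M}(M-i+1)G_i$ converges in distribution to a non-degenerate centred Gaussian (Lindeberg's CLT applies to these triangular weighted sums, as $\max_i(M-i+1)=M=o(M^{3/2})$), and for every $A$ the event $\{\limsup_M M^{-3/2}\sum_{m\le M}W_{m-1}\ge A\}$ lies in the tail $\sigma$-field of $(G_i)_i$ --- altering finitely many $G_i$ shifts $M^{-3/2}\sum_{m\le M}W_{m-1}$ by $O(M^{-1/2})\to0$ --- so by Kolmogorov's $0$--$1$ law it has probability $0$ or $1$; it cannot be $0$, since then $\prob(M^{-3/2}\sum_{m\le M}W_{m-1}\le A)\to1$, contradicting the Gaussian limit; intersecting over $A\in\N$ gives $\limsup_M M^{-3/2}\sum_{m\le M}W_{m-1}=+\infty$ a.s., and the $\liminf$ statement is symmetric.

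The main obstacle is precisely this last transfer: recurrence of the counter random walk $W_m$ is nearly immediate, but a long-run average depends on the entire history, so one must argue that the counter is large (resp.\ small) over a stretch long enough to drag the running average all the way to $+\infty$ (resp.\ $-\infty$). The excursion decomposition, the weighted-sum central limit theorem, and the tail $0$--$1$ law together supply this; the residual bookkeeping --- the martingale law of large numbers for $\sum_{m\le M}W_{m-1}(L_m-\ell)$, the strong law for $\sum_{m\le M}D_m'$, and the passage from the subsequence $(T_M)_M$ back to $\limavgsupS{S}(\randComp)$ and $\limavginfS{S}(\randComp)$ --- is routine.
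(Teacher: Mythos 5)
Your proof is correct, and on the only substantive part --- case (4) --- it takes a genuinely different route from the paper. Cases (1)--(3) match the paper's treatment in essence (Ergodic Theorem for the sign of $\expectedGain$; in the totally bounded case the counter is a function of the state, though the paper then reduces to the expected limit-average of a Markov chain with silent moves rather than computing $x$ from the stationary frequencies as you do --- either way $x\in\Q$ since the frequencies $x_q$ are rational). For case (4) the paper argues softly: $\{\limavginfS{S}(\randComp)=-\infty\}$ is a tail event (a finite prefix only shifts the counter by an additive constant), so by Kolmogorov's $0$--$1$ law it suffices to give it positive probability; this is done by (i) an expectation contradiction --- starting the chain from the stationary distribution, the expected counter value is $0$ at every time, so the running average cannot tend to $+\infty$ almost surely, giving $\prob(\limavginfS{S}(\randComp)<c)\ge\delta$ for some $c$ and $\delta>0$ --- and (ii) a prepending (``join'') argument that uses recurrence of arbitrarily low counter values at $q_0$ to push the threshold $c$ below any $d$ while retaining probability $\ge\delta$. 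You instead decompose into i.i.d.\ $q_0$-excursions, show the excursion gains $G_m$ are centred, non-degenerate and square-integrable, and reduce the running average along the prefix $[1,T_M]$ to $\frac1M\sum_{m\le M}W_{m-1}$, whose $M^{3/2}$-scale fluctuations you extract via a weighted-sum CLT together with a $0$--$1$ law on the tail of $(G_i)$. Your route is heavier (excursion/renewal structure, martingale strong law, Lindeberg CLT) but more quantitative, exhibiting $\sqrt M$-order oscillations of the running average, where the paper's argument gives no rate; conversely the paper needs nothing beyond first moments. One presentational wrinkle: the intermediate reduction ``it suffices that $\limsup_M\frac1M\sum_{m\le M}W_{m-1}=+\infty$'' is not literally enough because of your $O(M^{\varepsilon})$ error term, but the stronger statement you actually prove, $\limsup_M M^{-3/2}\sum_{m\le M}W_{m-1}=+\infty$ (and its $\liminf$ counterpart), does suffice once $\varepsilon<1/2$ is fixed, so nothing is lost.
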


\begin{proof}[Proof (of (1) and (2) from Lemma~\ref{l:probabilisticClassification})]
Assume that $\expectedGain \neq 0$.
The Ergodic Theorem for Markov chains implies that with probability $1$ (over $\randComp$)
for every state $q$, the frequency of configurations with the state $q$ converges to $x[q]$.
For every transition $(q,q',y)$, the frequency of this transition converges to $x_q \cdot P(q,q', {y})$.
Now, we multiply the frequency of each transition $(q,q',y)$ by its update value $y$ and 
get the value of the counter in $\randComp[n]$ divided by $n$.
On the other hand, this value converges to $\expectedGain$ as $n$ tends to infinity.
It follows that with probability $1$ over $\randComp$ the counter's value at a position $n$ equals 
$\expectedGain \cdot n \pm o(n)$. Therefore, if $\expectedGain >0$  we have $\limavginfS{S}(\randComp) = \limavgsupS{S}(\randComp) = \infty$.
Similarly, if $\expectedGain < 0$, then $\limavginfS{S}(\randComp) = \limavgsupS{S}(\randComp) = -\infty$ with probability $1$.
\end{proof}

\begin{proof}[Proof (of (3) from Lemma~\ref{l:probabilisticClassification})]
Assume that $\vass$ is totally bounded. In every computation $\comp$ if there are two configurations with the same state $(s,x_1), (s,x_2)$, then the counter's value is the same $x_1 = x_2$. To see that, consider a subcomputation from $(s,x_1)$ to $(s,x_2)$ and let $\Path$ be the path that corresponds to that subcomputation.
Then, $0 = \gain(\Path) = x_2 - x_1$.
Furthermore, since $\vass$ is strongly connected and $(q_0, 0)$ is the initial configuration for all computations, then in all computations 
the state determines the value of the counter. 

It follows that we can eliminate the counter and consider $\vass$ as a Markov chain with the limit-average objective with silent moves~\cite{LMCS19}.
In a Markov chain with silent moves, transitions are weighted with rational numbers and a special value $\bot$, which is skipped in the computation of partial averages.
Similarly to Markov chains, in strongly-connected Markov chains with silent moves, the expected limit-average in the Markov chain is actually the limit-average of almost all paths and it is our value $x$. Moreover, the expected value can be computed in polynomial time~\cite{LMCS19}.   

More precisely, let $h(q)$ be the value of the counter in the state $q$. 
We define a Markov chain $\markov$ corresponding to $\vass$ as follows.
We define $\markov = \tuple{\set{a},Q,\{ q_0\},\delta',P',\mu'}$ such that $\delta'(q,q',a)$ holds if and only if $(q,q',x) \in \delta$ for some $x \in \Z$, 
$P'(q,q',a) = \sum_{x \in \Z} P(q,q',x)$, and $\mu'(q_0) = 1$.  We consider the weighted Markov chain with silent moves 
$\tuple{\markov, \cost}$ such that
$\cost \colon \delta' \to \Z \cup \{ \bot \}$ is defined as $\cost(q,q',a) = h(q)$, if $q \in S$ is a selecting state, and 
$\cost(q,q',a) = \bot$ (is silent) otherwise. Observe that 
for every computation $\comp$ and the corresponding $\Path$ (without counter updates), we have $\limavgS{S}(\comp)$ is precisely the
limit average of costs $\cost$ of transitions along $\Path$. Since for almost all paths $\Path$ in $\tuple{\markov, \cost}$, 
the limit average cost of $\Path$ is the expected cost $x$ of $\tuple{\markov, \cost}$,
almost all computations in $\vass$ have the limit-average equal to $x$.
\end{proof}

It remains to prove (4) from Lemma~\ref{l:probabilisticClassification}). 
We only show that  $\limavginfS{S}(\randComp) = -\infty$ holds with probability $1$ over $\randComp$, as the proof
of  $\limavgsupS{S}(\randComp) = \infty$ is symmetric.
Observe that in a strongly-connected $\VASS(\Z)$, the event $\limavginfS{S}(\randComp) = -\infty$ is a tail event.
Therefore, due to Kolmogorov's 0-1 law~\cite{feller} it has  either probability $0$ or $1$. 
In consequence, it suffices to show that it has a positive probability.

First, we show that with a positive probability $\limavginfS{S}(\randComp)$ is upper bounded. 
\begin{restatable}{lemma}{LimInfIsFinite}
\label{l:LimInfIsFinite}
 Consider a probabilistic $\VASS(\Z)$ $\vass$ as in (4) of  Lemma~\ref{l:probabilisticClassification}.
 There exist $c \in \Q$ and $\delta > 0$ such that
 $\limavginfS{S}(\randComp) < c$ holds with probability greater than $\delta$.
\end{restatable}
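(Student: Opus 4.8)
The idea is to exhibit a concrete strategy for the random computation that, with positive probability, drags the average of the selected configurations below a fixed constant infinitely often — in fact below $c$ at a single well-chosen position suffices, by the $0$--$1$ argument already invoked above. Since $\vass$ is strongly connected and not totally bounded, there is a simple cycle $\cycleD$ with $\gain(\cycleD) < 0$; by rotating and concatenating with a short connecting cycle (exactly as in the proof of Lemma~\ref{ConditionsNecessary}, the unbounded-counter case) we may assume $\cycleD$ starts and ends in a fixed state $s$, contains a selecting state from $S$, and still has $\gain(\cycleD) < 0$. Let $q_0$ be the initial state; fix a finite path $\finPath_0$ from $q_0$ to $s$.

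First I would set up the estimate. For a fixed iteration count $m$, consider the finite path $\finPath_0 \cdot \cycleD^m$. Along $\cycleD^m$ the counter value decreases linearly: after the $t$-th copy of $\cycleD$ the counter is at most $(\text{const}) - t\cdot|\gain(\cycleD)|$, so the values at the selecting states appearing in $\cycleD^m$ have average $\leq (\text{const}') - \tfrac{m}{2}\cdot|\gain(\cycleD)|$. The constant from $\finPath_0$ and the bounded ``prefix contribution'' is a fixed quantity independent of $m$, so $\avgS{S}(\comp[0,k_m-1]) \to -\infty$ as $m \to \infty$, where $k_m$ is the length of $\finPath_0\cdot\cycleD^m$. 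Hence there is some $m_0$ with $\avgS{S}(\comp[0,k_{m_0}-1]) < c$ for, say, $c = 0$. Crucially, this bound holds for \emph{every} computation whose initial segment of length $k_{m_0}$ is exactly $\finPath_0\cdot\cycleD^{m_0}$ — the estimate used only the path, not the randomness.

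Next I would convert this into a probability lower bound. The event that a random computation $\randComp$ begins with the fixed finite path $\finPath_0\cdot\cycleD^{m_0}$ is a basic cylinder, and since every transition has probability in $(0,1]$, its probability is the product of the transition probabilities along $\finPath_0\cdot\cycleD^{m_0}$, a strictly positive number; call it $\delta_0 > 0$. On this event, $\limavginfS{S}(\randComp) = \liminf_k \avgS{S}(\randComp[0,k-1]) \leq \avgS{S}(\randComp[0,k_{m_0}-1]) < c$. (Here I use $\liminf \leq$ any particular value of the sequence only in the sense that the $\liminf$ is at most the value attained at position $k_{m_0}$ — more carefully, since $\liminf$ is the supremum of tail infima, it is $\le$ the infimum over the tail starting past $k_{m_0}$... ) — so instead of arguing about the $\liminf$ at a single position, I would note that it suffices to have infinitely many positions with small average, which is not automatic from one good prefix; the cleaner route is: the event $\limavginfS{S}(\randComp) < c$ is implied by the event $\{\avgS{S}(\randComp[0,k_{m_0}-1]) < c\}$ only if we additionally control the tail, so I will instead take $c$ itself to depend on the construction and observe that $\liminf_k \avgS{S}(\randComp[0,k-1]) < c$ whenever the average first dips below $c$ at position $k_{m_0}$ \emph{and} we are free to choose $c$ larger than the eventual lim inf — but since we only need $\mathbb{P}(\limavginfS{S} < c) > \delta$ for \emph{some} finite $c$, and $\avgS{S}(\randComp[0,k_{m_0}-1])$ is a fixed number strictly below $0$ on a cylinder of probability $\delta_0$, taking $c = 0$ works once we note $\liminf_k a_k \le a_{k_{m_0}}$ fails in general; the honest fix is that we want $\liminf$, so I would instead choose the connecting structure so that after $\cycleD^{m_0}$ we can return to $s$ and the computation continues, and use that with probability $\delta_0$ the prefix forces $\avgS{S}$ below $0$ at position $k_{m_0}$, which gives $\liminf_k \avgS{S}(\randComp[0,k-1]) \le 0$ is \emph{not} guaranteed — therefore the correct statement to prove, and what the lemma actually asserts, is $\mathbb P(\limavginfS{S}(\randComp) < c) > \delta$ with $c,\delta$ chosen \emph{after} seeing that $\limavginfS{S}$ is a.s.\ constant (by Kolmogorov) equal to some value in $[-\infty,\infty]$; so the real content is ruling out $\limavginfS{S} = +\infty$ a.s., i.e.\ showing the a.s.-constant value is finite, and for that it is enough that on a positive-probability event some \emph{single} partial average is $\le 0$, because an a.s.-constant-$+\infty$ lim inf would force \emph{all but finitely many} partial averages above every bound a.s., contradicting the positive-probability prefix.

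\textbf{The main obstacle} is precisely this last point: a bound on one prefix average does not by itself bound the $\liminf$. I would resolve it by combining the prefix construction with the Kolmogorov $0$--$1$ law already established for this tail event: if $\limavginfS{S}(\randComp) = +\infty$ held almost surely, then for the constant $N = 0$, almost every computation would have $\avgS{S}(\randComp[0,k-1]) > 0$ for all sufficiently large $k$; but the cylinder of the fixed prefix $\finPath_0 \cdot \cycleD^{m_0} \cdot \cycleD^{m_0}$ — doubling the negative-gain block — has probability $\delta_0^2 > 0$ and along it the partial average at its endpoint is even more negative, and more to the point, iterating: for each $m$ the cylinder $\finPath_0\cdot\cycleD^m$ has positive probability and endpoint-average tending to $-\infty$, so there cannot be an a.s. floor. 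Formally I would argue by contradiction: suppose $\mathbb P(\limavginfS{S}(\randComp) \ge c) = 1$ for every $c$; pick $c = 1$; then $\mathbb P(\exists K\, \forall k \ge K : \avgS{S}(\randComp[0,k-1]) \ge 1) = 1$, so for some finite $K_0$, $\mathbb P(\forall k \ge K_0 : \avgS{S}(\randComp[0,k-1]) \ge 1) \ge 1/2$; but choosing $m_0$ so large that $k_{m_0} \ge K_0$ \emph{and} $\avgS{S}$ of the forced prefix is $< 1$, the positive-probability cylinder of that prefix is disjoint from this event, and by strong connectivity we can moreover route through $s$ at a position $\ge K_0$, contradiction once $\delta_0 > 1/2$-type estimates are arranged — since $\delta_0$ can be made $\ge$ any threshold only if... no. So the genuinely correct and clean argument is: the event $E_c = \{\limavginfS{S}(\randComp) < c\}$ is a tail event, hence has probability $0$ or $1$; it has positive probability because the cylinder $\finPath_0\cdot\cycleD^{m_0}$ (for $m_0$ large, endpoint-average $< c-1$) has probability $\delta_0 > 0$ and is \emph{contained} in $E_c$ — containment holding because on that cylinder the counter has already dropped so far that, whatever happens afterwards in a strongly connected graph with bounded transition updates, the partial averages over $S$ return below $c$ infinitely often (the finitely many bounded ``excursions'' away from $s$ cannot recover a linear-in-$m_0$ deficit, and $s$ recurs with probability $1$). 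I would write the containment step carefully, as that is where the work is. Then $\mathbb P(E_c) = 1 > \delta$ for any $\delta < 1$, completing the proof; in particular the lemma's weaker conclusion $\mathbb P(E_c) > \delta$ follows a fortiori.
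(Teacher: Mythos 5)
The final argument you settle on has two breaking flaws. First, for a finite $c$ the event $E_c=\{\limavginfS{S}(\randComp)<c\}$ is \emph{not} a tail event: changing a finite prefix of a path shifts all later counter values, and hence the limit inferior of the partial averages, by an additive constant, so only shift-invariant events such as $\{\limavginfS{S}(\randComp)=-\infty\}$ (the one to which the paper applies Kolmogorov's 0--1 law) lie in the tail $\sigma$-field. You therefore cannot conclude $\prob(E_c)\in\{0,1\}$. Second, and more seriously, the claimed containment of the cylinder over $\finPath_0\cdot\cycleD^{m_0}$ in $E_c$ is false and circular: after the forced prefix, by the Markov property the continuation is a fresh copy of the process started in $(s,-M)$, and its contribution to the liminf is $-M$ plus the liminf of a copy started at counter value $0$. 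If that copy has $\limavginfS{S}=+\infty$ --- precisely the possibility the lemma must exclude --- then the whole path has liminf $+\infty\not<c$. Your parenthetical justification (``the finitely many bounded excursions away from $s$ cannot recover a linear-in-$m_0$ deficit'') is incorrect: the excursions between visits to $s$ are neither finitely many nor of bounded gain; since $\vass$ has zero expected gain but is not totally bounded, it contains positive-gain cycles, so there are continuations whose partial averages climb past any fixed deficit and never return below $c$. Whether such continuations have probability $0$ is (up to an additive shift) exactly the statement being proved, so this step begs the question. Earlier in your write-up you correctly identify this very obstacle (``one good prefix does not bound the liminf''), but the final fix reintroduces it.

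The paper avoids path-by-path reasoning altogether: it passes to $\vass'$, the same chain started in the stationary distribution $\mu(q)=x_q$, and argues by contradiction that if $\limavginfS{S}(\randComp)=\infty$ held almost surely then $\expected_{\vass'}(\avgS{S}(\randComp[1,n]))\to\infty$, whereas $\expectedGain=0$ and stationarity force the expected counter value, and hence the expected average over $S$, to be $0$ at every position $n$; note that ruling out ``$\limavginfS{S}=+\infty$ a.s.'' is equivalent to the lemma's conclusion, since $\{\limavginfS{S}<\infty\}=\bigcup_{c\in\Q}E_c$. If you wish to salvage a cycle-based route, you would need a genuine recurrence argument --- for instance, that the counter values at successive returns to $s$ form a non-degenerate mean-zero random walk and therefore dip below every level almost surely (Chung--Fuchs), combined with a prepending argument in the spirit of Lemma~\ref{l:MiniusInfinityIndeed} to convert low counter values into low partial averages --- which is considerably more work than the expectation argument the paper uses.
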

\begin{proof}
 Let $\vass'$ results from $\vass$ by assuming that the all states are initial, i.e., $Q_0 = Q$, 
 and the initial distribution over states $\mu$ coincides with the long-run frequencies of states, i.e., $\mu(q) = x_q$.
 
Suppose that $\limavginfS{S}(\randComp) = \infty$  with probability $1$ w.r.t. $\vass$. 
Then, it also holds with probability $1$ w.r.t. $\vass'$.
Then, the average counter value at the $n$-th position converges to $\infty$ ($\lim_{n \to \infty} \avgS{S}(\randComp[1,n]) = \infty$) 
with probability $1$ in $\vass'$.
Therefore, the expected average counter value up to position $n$, $\expected_{\vass'}(\avgS{S}(\randComp[1,n]))$, converges to $\infty$.
However, the expected gain is $0$, which implies that in $\vass'$, at every position $n$, the expected value of the counter (in $\vass'$) is $0$. 
It follows that $\expected_{\vass'}(\avgS{S}(\randComp[1,n]))$ is $0$. A contradiction.
\end{proof}

For $c \in \Q$,  we define $X_{c}$ as the set of computations $\comp$ such that $\limavginfS{S}(\comp) < c$.
Lemma~\ref{l:LimInfIsFinite} states that there are $c \in \Q$ and $\delta > 0$ such that $\prob(X_c) = \delta$.
We show that for every $d \in \Q$,  $\limavginfS{S}(\randComp) < d$ holds with probability at least $\delta$.

 \begin{restatable}{lemma}{MiniusInfinityIndeed}
 \label{l:MiniusInfinityIndeed}
 Consider  a probabilistic $\VASS(\Z)$ $\vass$ as in (4) of Lemma~\ref{l:probabilisticClassification}.
 Assume that $\prob(X_c) = \delta > 0$.
 Then, for every $d$ we have $\prob(\limavginfS{S}(\randComp) < d ) \geq  \delta$.
 \end{restatable}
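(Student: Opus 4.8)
The statement says: if $\prob(X_c)=\delta>0$ for some $c$, then for *every* $d$ we have $\prob(\limavginfS{S}(\randComp)<d)\ge\delta$. Since $\limavginfS{S}(\randComp)<d$ is a tail event (in a strongly-connected VASS, shifting the computation by finitely many steps does not change $\liminf$ of averages), Kolmogorov's 0--1 law gives that this event has probability $0$ or $1$; combined with the claimed lower bound $\delta>0$ it would then have probability $1$, which is exactly what is needed to finish part (4). So really the whole content is: bootstrap a fixed positive probability at threshold $c$ into the same positive probability at an arbitrarily negative threshold $d$.

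Let me think about the mechanism. The idea should be: starting from the recurring configuration (or from $s_0$ with the long-run frequency distribution, as in Lemma~\ref{l:LimInfIsFinite}), with probability $\delta$ the computation achieves $\liminf$ of averages below $c$. Now I want to "prepend" a finite segment that drags the counter value down by a large constant $M$, so that afterwards the *same* event (relative to the shifted, lowered configuration) pushes the $\liminf$ below $c - M$. Concretely: by case (4) we know there is a cycle with negative gain through a selecting state — more precisely, $\expectedGain=0$ but $\vass$ not totally bounded means there are cycles of strictly negative gain; and since $\vass$ is strongly connected we can route through a selecting state. Fix such a cycle $\cycleI$ with $\gain(\cycleI)<0$; from any state we can reach its start, iterate it $N$ times, and return, for a net counter drop of roughly $N\cdot|\gain(\cycleI)|$, along a path of length $\ell_N$ that has some fixed positive probability $p_N>0$. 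Here is the subtle point: prepending a finite segment does not directly lower $\liminf_k \avgS{S}(\compFin[1,k])$ by $M$, because $\avgS{S}$ divides by the *number of selecting configurations seen so far*, and that number grows without bound — a finite head contributes $o(1)$ to the average in the limit. So a plain prepended segment is *not* enough; what matters is shifting the counter value seen during the *infinite tail*.

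So the correct reading: after the prepended segment the counter starts from value $\approx -M$ instead of $0$, and then the tail behaves (by the Markov property and strong connectivity, conditioning on being at the appropriate state) distributionally like a computation of $\vass'$, for which with probability $\delta$ we have $\limavginfS{S}<c$ when measured from counter value $0$. Measured from counter value $-M$, every counter value along that tail is shifted down by $M$ (the updates are the same), so all partial averages $\avgS{S}$ along the tail are shifted down by $M$, hence $\limavginfS{S}<c-M$. Therefore $\prob(\limavginfS{S}(\randComp)<c-M)\ge p_N\cdot(\text{prob.\ of reaching the right state})\cdot\delta$, which is positive but possibly smaller than $\delta$. To recover the full bound $\ge\delta$ and reach arbitrary $d$, I would instead argue by tail-triviality at the end: I have shown $\prob(\limavginfS{S}(\randComp)<d)>0$ for the specific $d=c-M$, and choosing $M$ large enough covers any target $d$; by Kolmogorov 0--1 this probability is then $1\ge\delta$. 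Alternatively, to get the literal "$\ge\delta$" without invoking 0--1 prematurely, decompose according to the first visit to $s_0$: from $s_0$ the residual computation has, with probability $\delta$, $\limavginfS{S}<c$ relative to the counter value at that visit $v_{s_0}$; and separately one shows that with probability $1$ the counter value at successive visits to $s_0$ is unbounded below (since $\expectedGain=0$ with a nonzero-gain cycle gives a null-recurrent-like oscillation — this is essentially the content already used to show the counter is not lower bounded a.s.), so for any $d$, almost surely some visit has $v_{s_0}<d-c$, and conditioned on that visit the residual event of probability $\delta$ yields $\limavginfS{S}<d$; a careful union/strong-Markov argument then pushes the bound to $\delta$ (in fact to $1$).

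**Main obstacle.** The delicate step is the one I flagged: a finite prepended path changes $\liminf$ of *averages* by $o(1)$, not by the prepended mass, so the argument must lower the counter value felt by the entire infinite tail, using the strong Markov property to re-identify the tail distribution with that of $\vass$ (or $\vass'$) started from a shifted configuration, and then use that counter updates are translation-invariant so the whole trajectory of partial averages translates. Getting the measurability/conditioning bookkeeping right — that "with probability $\delta$, $\limavginfS{S}<c$" is genuinely a property of the residual computation from a recurring state and hence composes with reaching that state — is where the real care is needed; everything else (existence of a negative-gain cycle through a selecting state, positive probability of any finite path, unboundedness of $v_{s_0}$ below, and the final 0--1 law invocation) is routine given the earlier lemmas.
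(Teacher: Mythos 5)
Your second route is, in essence, the paper's own proof. The paper takes the set $Y$ of finite subcomputations that start in $(q_0,0)$ and stop on first reaching a configuration $(q_0,b)$ with $b<d-c$ (this happens almost surely because $\vass$ is strongly connected and not totally bounded, so the counter is a.s.\ unbounded below along visits to $q_0$), and joins each such prefix with a computation from $X_c$: since the counter updates are translation invariant, every partial average of the tail is shifted by $b$, while the finite prefix contributes only $o(1)$, so each join satisfies $\limavginfS{S}<c+b<d$; decomposing at the stopping time and using the Markov property gives probability at least $1\cdot\delta=\delta$. You isolated exactly the right subtlety — the prefix matters only through the counter shift it imposes on the infinite tail — so on this main line your plan matches the paper.

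One genuine error you should repair: for finite $d$ the event $\{\limavginfS{S}(\randComp)<d\}$ is \emph{not} a tail event. Changing a finite prefix shifts all subsequent counter values by the change in accumulated gain, hence shifts the liminf of averages by that constant, which can move it across the threshold $d$; only the shift-invariant event $\{\limavginfS{S}(\randComp)=-\infty\}$ is a tail event. Consequently your first route — proving a (possibly $d$-dependent, small) positive probability for $d=c-M$ and then applying Kolmogorov's 0--1 law to $\{\limavginfS{S}<d\}$ — does not go through, and it would also defeat the point of the lemma: the uniform bound $\geq\delta$, independent of $d$, is precisely what allows one to pass to the intersection over all $d$ and conclude $\prob(\limavginfS{S}(\randComp)=-\infty)\geq\delta>0$, after which the 0--1 law is legitimately applied (as in the paper's proof of item (4)) to that genuinely tail event. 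So keep your second argument and drop the tail-event claim for finite thresholds.
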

\begin{proof}
The main idea is to prepend to computations from $X_{c}$ a subcomputation that decreases the initial counter's value to $a$.
Then, the limit infimum of averages is $c+a$. Furthermore, we show that the set of such finite paths has probability $1$.

More precisely,  consider a subcomputation $\compFin$ from $(q_0,0)$ to $(q_0,a)$ and $\comp \in X_{c}$. 
We define the \emph{join} of $\compFin$ and $\comp$, denoted by $\compFin \Join \comp$, 
as the computation consisting of first $\compFin$ and then $\comp[1,\infty]$ ($\comp$ with the first configuration
removed) with $a$ added to the counter of all following configurations of $\comp$.
 Observe that the join of $\compFin$ and $\comp$ is indeed a computation. 
 Moreover, the influence of the average of $\compFin$ on the whole computation diminishes 
 and hence $\limavginfS{S}(\compFin \Join \comp) = \limavginfS{S}(\comp) + a < c + a$. 

Let $Y$ be the set of (finite) subcomputations that start in $(q_0, 0)$ and terminate once they reach some configuration $(q_0,b)$ where $b < d-c$. 
Since $\vass$ is strongly connected and not totally bounded, 
almost surely a random computation $\randComp$ reaches a configuration $(q_0,b)$ where $b < d-c$. 
It follows that the set of all computations extending some subcomputation from $Y$ has probability $1$.
Therefore, the set of all joins of subcomputations from $Y$ with computations from $X_{c}$
has probability at least $\delta$ and all such computations $\compFin \Join \comp$ satisfy $\limavginfS{S}(\compFin \Join \comp) < d$, and hence
Lemma~\ref{l:MiniusInfinityIndeed} follows.
\end{proof}
     
\begin{proof}[Proof (of (4) from Lemma~\ref{l:probabilisticClassification})]
Lemma~\ref{l:MiniusInfinityIndeed} implies that the set of computations $\randComp$ such that 
$\limavginfS{S}(\randComp) = -\infty$ has a positive probability. 
Since $\limavginfS{S}(\randComp) = -\infty$ is a tail event in a strongly-connected $\VASS(\Z)$, Kolmogorov's 0-1 law~\cite{feller} implies that its probability is $1$, which concludes
the proof of Lemma~\ref{l:probabilisticClassification}.
\end{proof}

Lemma~\ref{l:probabilisticClassification} implies the following:
\begin{lemma}
The expected average problem for strongly-connected probabilistic $\VASS(\Z,1)$ can be solved in polynomial time.
\end{lemma}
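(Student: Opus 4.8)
The plan is to show that Lemma~\ref{l:probabilisticClassification} reduces the expected average problem to a finite case analysis, each branch of which is either immediate or appeals to a known polynomial-time procedure. First I would compute, in polynomial time, the long-run frequencies $x_q$ of the states by treating the underlying graph of $\vass$ as a finite Markov chain and solving the associated linear system (stationary distribution plus balance equations), using the standard algorithms for irreducible Markov chains referenced in~\cite{BaierBook,filar}. From the $x_q$ one immediately forms $\expectedGain = \sum_{(q,q',y)\in\delta} x_q\cdot P(q,q',y)\cdot y$; this is a single rational arithmetic expression over polynomially many terms, hence computable in polynomial time.

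Next I would branch on the sign of $\expectedGain$. If $\expectedGain>0$, then by case~(1) of Lemma~\ref{l:probabilisticClassification} we have $\limavginfS{S}(\randComp)=\limavgsupS{S}(\randComp)=\infty$ almost surely, so both $\liminf$ and $\limsup$ of partial averages coincide at $+\infty$; hence $\limavgS{S}(\randComp)=\infty$ with probability $1$, and $\expected_\vass(\limavgS{S})=\infty$ — the answer is output directly. Symmetrically, if $\expectedGain<0$, case~(2) gives $\expected_\vass(\limavgS{S})=-\infty$. If $\expectedGain=0$, I would check whether $\vass$ is totally bounded, i.e.\ whether every cycle has gain $0$; this is testable in polynomial time (for instance, strong connectivity plus the existence of a potential function $h\colon Q\to\Z$ with $h(q')-h(q)=y$ for every transition $(q,q',y)$, which is a polynomial-size system of linear equations — if it is infeasible the VASS is not totally bounded, and the certificate of infeasibility yields a nonzero-gain cycle). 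If $\vass$ is not totally bounded, case~(4) applies: $\limavginfS{S}(\randComp)=-\infty$ and $\limavgsupS{S}(\randComp)=\infty$ almost surely, so $\limavgS{S}(\randComp)=\liminf_k \avgS{S}(\compFin[1,k])=-\infty$ with probability $1$, and the expected value is $-\infty$.

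The remaining branch, $\expectedGain=0$ and $\vass$ totally bounded, is the one requiring real work, and it is where I expect the only genuine obstacle to lie. Here, as established in the proof of case~(3), the state of a reachable configuration uniquely determines the counter value, so one can extract a function $h\colon Q\to\Z$ (computed in polynomial time by fixing $h(q_0)=0$ and propagating along a spanning tree of $\vass$, the values being well defined precisely because all cycles have gain $0$), eliminate the counter, and reinterpret $\vass$ as a finite weighted Markov chain with silent moves $\tuple{\markov,\cost}$ where $\cost(q,q',a)=h(q)$ for $q\in S$ and $\cost(q,q',a)=\bot$ otherwise. One must check that $S$ has frequency $>0$ (otherwise $\limavgS{S}$ is undefined on a set of measure $1$, and so is the expectation, which is then reported as undefined); if $\sum_{q\in S}x_q>0$ then by the theory of limit-average objectives for strongly-connected Markov chains with silent moves~\cite{LMCS19}, almost all paths realize the same limit-average, namely the expected long-run reward $x=\big(\sum_{q\in S}x_q\cdot h(q)\big)/\big(\sum_{q\in S}x_q\big)$, and this quantity is computed in polynomial time by the algorithm of~\cite{LMCS19}. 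The subtle point — and the main thing to get right — is verifying that $h$ is well defined and polynomially bounded in bit-size: well-definedness is exactly total boundedness, and the bound follows because along any spanning-tree path the accumulated counter update is a sum of at most $|Q|$ many binary-encoded updates, hence of polynomial length. Assembling the four branches gives a polynomial-time decision/computation procedure, proving the lemma.
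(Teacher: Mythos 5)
Your proposal is correct and follows essentially the same route as the paper: compute the state frequencies and $\expectedGain$, dispatch on the classification of Lemma~\ref{l:probabilisticClassification}, and in the totally bounded zero-gain case eliminate the counter via the state-determined values and reduce to the expected long-run reward of a finite Markov chain with silent moves. The only cosmetic differences (testing total boundedness by a potential-function linear system rather than by searching for a nonzero-gain cycle, and writing out the closed-form frequency-weighted average) are equivalent polynomial-time steps, and your extra care about the frequency of $S$ and the bit-size of $h$ only makes the argument more explicit.
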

\begin{proof}[Proof's ideas] 
Consider a strongly-connected probabilistic $\VASS(\Z,1)$ $\vass$.
We can compute frequencies $x_q$ of states of $\vass$ in polynomial time using standard methods~\cite[Chapter 10.5]{BaierBook}.
Having frequencies $x_q$, we can compute the expected gain $\expectedGain$ of $\vass$ in polynomial time from the definition.

Assume that $\expectedGain = 0$. We can check whether $\vass$ is not totally bounded by checking 
whether it has a cycle with a non-zero gain, which can be done in polynomial time.
Finally, if it is totally bounded, then each state of $\vass$ uniquely determines the value of each counter,
and we can eliminate the counters and label states with counter values. Therefore,
the problem of computing the long-run average of almost all  computations, denoted by $x$, reduces to computing the expected long-run reward a Markov chain with rewards, which can be done in polynomial time~\cite[Chapter 10.5]{BaierBook}.
In consequence, we can check all the conditions of Lemma~\ref{l:probabilisticClassification} in polynomial time and hence the result follows.
\end{proof}

\subsubsection{The general case}
Let $\vass$ be a probabilistic $\VASS(\Z,1)$. 
We show how to compute its expected limit-average in polynomial time. 
We identify all \emph{bottom} SCCs (BSCCs) of $\vass$ (where an SCC $B$ is \emph{bottom} if all states reachable from $B$ 
belong to $B$). 
If there is a BSCC that does not contain a state from $S$, then the expected limit-average is undefined. 
Assume that every BSSC contains a state from $S$  and consider the following cases:
\begin{itemize}
    \item If there are two BSCCs: (a)~one with a positive expected gain, and  (b)~the other with a negative expected gain, then the expected limit average is undefined. The expected value is undefined for random variables that attain $+\infty$ and $-\infty$ with 
a positive probability~\cite{feller}.
\item If there is a BSCC with a positive gain and every BSCC has (a)~a positive gain, or (b)~it has the zero gain and it is totally bounded, 
then the expected limit-average  is $\infty$.
\item If there is a BSCC with (a)~a negative gain, or (b)~the zero gain and not totally bounded, 
and every BSCC has a non-positive gain, 
then the expected limit-average  is $-\infty$.
\item If all BSCCs have the zero gain and are totally bounded, the expected limit-average is finite and we discuss below how to compute it.
\end{itemize}

First, we compute all BSCCs $B_1, \ldots, B_m$ of $\vass$.
We pick in each of these components an initial state $q_0^i$.
For each BSCC $B_i$ with its initial configuration $(q_0^i,0)$, we compute $x_i$, which is the expected limit-average in $B_i$.
As we observed before, if we join a subcomputation from $(q_0, 0)$ to $(q_0^i,y_i)$ and some computation from $(q_0^i,0)$  with the limit-average $x_i$, then the limit-average of the resulting computation is $x_i + y_i$. 
Therefore, for each state $q_0^i$ we compute the probability of reaching that state from the initial distribution, denoted $p_i$, 
and the expected counter's value $y_i$ upon reaching $q_0^i$, i.e., the conditional expected counter's value under the condition
that the state $q_0^i$ is reached. Probabilities $p_i$ can be computed using standard methods for Markov chains~\cite[Chapter 10.1]{BaierBook}. 
The values $y_i$ can be computed as well using standard methods for Markov chains with rewards~\cite[Chapter 10.5]{BaierBook}.
Observe that the expected limit-average of $\vass$ is given by the following formula:
\[
\expected_\vass(\limavgS{S}) = \sum_{i=1}^m p_i \cdot (y_i + x_i) 
\] 

Finally, as we discussed above, we can compute the expected value for each counter separately. In consequence we have the following:

\begin{theorem}
\label{th:expectedZ}
The expected average problem for probabilistic $\VASS(\Z)$ can be solved in polynomial time.
\end{theorem}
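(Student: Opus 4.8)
The plan is to reduce to the single-dimensional strongly-connected case, which has already been settled, and then to glue the per-component answers together along a BSCC decomposition. First I would invoke modularity: in a $\VASS(\Z,1)$ every infinite path is a computation, so $\prob_{\vass}$ is exactly the path measure of the underlying finite Markov chain, and $\expected_{\vass}(\limavgS{\vec{S}})$ is obtained coordinate by coordinate. Hence it suffices to compute $\expected_{\vass}(\limavgS{S})$ for a single probabilistic $\VASS(\Z,1)$ $\vass$.

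Next I would observe that $\limavgS{S}(\randComp)$ depends only on the bottom SCC (BSCC) that $\randComp$ eventually enters: almost surely $\randComp$ reaches some BSCC, and, as in the join/shift argument of Lemma~\ref{l:MiniusInfinityIndeed}, the prefix leading there has vanishing influence on long-run averages beyond the additive shift by the counter value at entry. So I would compute all BSCCs $B_1,\ldots,B_m$ in polynomial time. If some $B_i$ contains no state of $S$, then with positive probability $\limavgS{S}(\randComp)$ is undefined, hence $\expected_{\vass}(\limavgS{S})$ is undefined and we stop. Otherwise each $B_i$, equipped with a chosen initial configuration $(q_0^i,0)$, is a strongly-connected probabilistic $\VASS(\Z,1)$, so its $\expectedGain$ and the property of being totally bounded are computable in polynomial time, and Lemma~\ref{l:probabilisticClassification} pins down the almost-sure value of $\limavgS{S}$ inside $B_i$: it is $+\infty$ if $\expectedGain>0$; it is $-\infty$ if $\expectedGain<0$ or if $\expectedGain=0$ and $B_i$ is not totally bounded (since $\limavgS{S}$ is the $\liminf$); and it is a finite rational $x_i$ if $B_i$ is totally bounded (which forces $\expectedGain=0$).

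Then I would carry out the exhaustive case analysis. If one BSCC has almost-sure value $+\infty$ and another has almost-sure value $-\infty$, the random variable attains $+\infty$ and $-\infty$ each with positive probability, so the expectation is undefined. If every BSCC is either totally bounded or has $\expectedGain>0$, with at least one of the latter, the expectation is $+\infty$; the symmetric condition gives $-\infty$. In the remaining case every BSCC is totally bounded with zero gain, and the expectation is finite. For this case I would compute, for each $i$: the value $x_i$, the expected limit-average inside $B_i$ from $(q_0^i,0)$ --- since $B_i$ is strongly connected and totally bounded each state pins down the counter value, so one eliminates the counter and reduces to the expected long-run reward of a Markov chain (with silent moves for non-selecting states), computable in polynomial time; the probability $p_i$ that $\randComp$ enters $B_i$, via standard reachability-probability computations for Markov chains; and the conditional expected counter value $y_i$ upon first reaching $q_0^i$ (conditioned on reaching $B_i$), via a standard finite Markov-reward hitting computation. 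The shift argument then yields, conditioned on entering $B_i$, that $\limavgS{S}(\randComp)=y_i+x_i$ almost surely, so
\[
\expected_{\vass}(\limavgS{S}) \;=\; \sum_{i=1}^{m} p_i\,(y_i+x_i),
\]
all quantities have polynomial bit-size and are computed in polynomial time, and re-assembling the coordinates proves the theorem.

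I expect the main obstacle to be the justification of the decomposition $\limavgS{S}(\randComp)=y_i+x_i$ almost surely on the event of entering $B_i$. This needs two ingredients: that in a strongly-connected totally bounded $\VASS(\Z,1)$ the almost-sure limit-average is independent of the reachable starting configuration up to the additive shift given by its counter value (which holds because there the state determines the counter value), and that $y_i$, the expected counter value on first hitting $q_0^i$, is finite and polynomially computable --- a standard but not entirely trivial finite-Markov-reward hitting-time fact. I would also be careful to make the $\pm\infty$ case analysis genuinely exhaustive and to report the value as undefined exactly when the limit-average random variable is non-integrable in the two-sided-infinite sense.
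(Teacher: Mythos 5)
Your proposal is correct and follows essentially the same route as the paper: per-coordinate modularity, the strongly-connected classification of Lemma~\ref{l:probabilisticClassification}, BSCC decomposition with the $\pm\infty$/undefined case analysis, and the formula $\sum_{i} p_i\,(y_i+x_i)$ computed by standard Markov-chain methods in the all-totally-bounded case. If anything, classifying BSCCs by their almost-sure $\limavgS{S}$ value rather than by the sign of $\expectedGain$ makes the ``undefined'' case (e.g.\ a positive-gain BSCC together with a zero-gain, not totally bounded one) slightly more explicitly exhaustive than the paper's bullet list.
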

 
\section{Results on natural-valued VASS}

\subsection{The average problem in a single dimension}

We first study the average problem for single-dimensional $\VASS(\N,1)$. 
For the lower bound observe that the reachability problem for  $\VASS(\N,1)$, which is $\NP$-complete~\cite{haase2009reachability},
reduces to the average problem for $\VASS(\N,1)$. 
The reduction is straightforward and hence we omit it. 
To show the $\NP$ upper bound, we show the following:

\begin{restatable}{lemma}{LemmaOndDimNatural}
\label{l:LemmaOndDimNatural}
For all $\VASS(\N,1)$ $\vass$ the following holds:
there exists a computation $\comp$ with $\limavgS{S}(\comp) \leq \lambda$ if
and only if  there exist subcomputations $\compFin_0, \compFin_c$ such that:
 \begin{itemize}
    \item $\compFin_0$ is from $(q_0,0)$ to  $(s,x)$, where $x \leq \lambda$, and
    \item $\compFin_c$ is a cycle from $(s,x)$ to itself satisfying the following conditions:
    \begin{enumerate}[(a)]
      \item $\avgS{S}(\compFin_c) \leq  \lambda$,
      \item the number of configurations with selecting states in $\compFin_c$, i.e., configurations from $S \times \N$, is 
     $O(|S|\cdot|\lambda|^2)$, and 
      \item the value of the counter in each configuration of $\compFin_c$ from $S \times \N$ is 
      $O(|S|\cdot|\lambda|^2)$.
  \end{enumerate}
\end{itemize}
\end{restatable}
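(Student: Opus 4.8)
The plan is to prove the two implications separately.

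\emph{The ``if'' direction (straightforward).} Given $\compFin_0$ and $\compFin_c$ as in the statement, take the computation $\comp = \compFin_0\,\compFin_c\,\compFin_c\cdots$. Since $\compFin_0$ is a valid subcomputation ending in $(s,x)$ and $\compFin_c$ is a valid cycle from $(s,x)$ to itself, $\comp$ is a valid computation of the $\VASS(\N,1)$; its selecting configurations are those of $\compFin_0$ followed by periodic repetitions of those of $\compFin_c$, so the averages $\avgS{S}$ of prefixes ending at iteration boundaries converge to $\avgS{S}(\compFin_c)\leq\lambda$, whence $\limavgS{S}(\comp)\leq\lambda$. (The bound $x\leq\lambda$ is not used here; it is only a byproduct of the converse construction.)

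\emph{The combinatorial core.} For the ``only if'' direction the key step is to extract, from a witnessing computation, a \emph{simple cycle of the configuration graph} of $\vass$ --- a cycle along pairwise distinct configurations --- that is reachable from $(q_0,0)$, contains a selecting configuration, and has average over $S$ at most $\lambda$. Simplicity pays off at once: such a cycle carries at most one configuration per state of $S$, so each counter value is attained at most $|S|$ times among its selecting configurations; if those values are $u_1\leq\cdots\leq u_r$ then $u_i\geq\lceil i/|S|\rceil-1$, hence $\sum_{i=1}^r u_i\geq\frac{r(r+1)}{2|S|}-r$, and together with $\frac{1}{r}\sum_i u_i\leq\lambda$ this forces $r=O(|S|\cdot|\lambda|)$ and then $\max_i u_i\leq\sum_i u_i\leq\lambda r=O(|S|\cdot|\lambda|^2)$. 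So such a cycle already meets conditions (b) and (c). To conclude, rotate it to begin at its selecting configuration of minimum counter value $x$ (a minimum is below the average, so $x\leq\lambda$), set $\compFin_c$ to this rotation, and take $\compFin_0$ to be the prefix of the witness up to the first occurrence of $(s,x)$.

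\emph{Producing the simple cycle, and the main obstacle.} Assume $\comp$ with $\limavgS{S}(\comp)\leq\lambda$; then $\lambda\geq0$, since averages over $S$ are nonnegative, so otherwise the statement holds vacuously. Suppose for contradiction that every simple configuration cycle reachable from $(q_0,0)$ and containing a selecting configuration has average over $S$ above $\lambda$; by the counting above only finitely many such cycles have average at most $\lambda+1$, so all of them have average at least $\lambda+\delta$ for some fixed $\delta>0$. Decompose a long prefix of $\comp$ greedily into simple configuration cycles plus an acyclic remainder; since counter values are nonnegative, the remainder contributes a nonnegative amount to the numerator, so the average over $S$ of the prefix is at least $(\lambda+\delta)(1-a/m)$, where $m$ is the number of selecting configurations in the prefix and $a$ the number of them on the remainder. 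If $a/m\to0$ along a sequence of prefixes whose average over $S$ tends to $\limavgS{S}(\comp)\leq\lambda$, we get $\lambda\geq\lambda+\delta$, a contradiction, and the sought cycle exists. The delicate point is bounding $a$. If the counter stays bounded along $\comp$ only finitely many configurations occur, the acyclic remainder contains boundedly many of them, and $a=O(1)$. If the counter is unbounded it cannot tend to infinity (the selecting values, hence the averages, would then diverge), so some configuration $(s^*,x^*)$ with $x^*=O(|\lambda|)$ recurs infinitely often as a low-value selecting configuration; I would then decompose $\comp$ at its occurrences and argue that good prefixes can be taken to end near such occurrences --- equivalently, that a witness with a ``tame'' counter may be assumed --- which is where I expect the real work to lie.
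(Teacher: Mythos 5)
Your ``if'' direction is fine, and so is the counting step: simplicity of the cycle plus nonnegativity of the counter does give (b) and (c) (in fact you obtain the sharper bound $O(|S|\cdot|\lambda|)$ on the number of selecting configurations). The genuine gap is in the ``only if'' direction, and it is exactly the step you flag yourself: you never actually produce the reachable simple configuration cycle with a selecting configuration and average at most $\lambda$. Your contradiction argument reduces its existence to showing that, along prefixes whose averages approach $\liminf\le\lambda$, the fraction $a/m$ of selecting configurations lying on the acyclic remainder of the greedy cycle decomposition is negligible; in the unbounded-counter case you only sketch a plan (``decompose at occurrences of a recurring low-value selecting configuration'', ``a tame witness may be assumed'') and explicitly defer the work. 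As written, nothing prevents the remainder from carrying up to $|S|\cdot(\text{max counter value}+1)$ selecting configurations, a possibly constant fraction of $m$ on the prefixes you want to use, so the contradiction does not yet go through. A smaller slip: ``only finitely many such cycles have average at most $\lambda+1$'' is literally false (the non-selecting portions of such cycles are unconstrained); what is true, and what you need for the uniform $\delta$, is that only finitely many \emph{averages} $\le\lambda+1$ are achievable, since the selecting values of such a cycle form one of finitely many multisets.

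For comparison, the paper avoids the remainder problem altogether: from $\liminf\le\lambda$ it gets by pigeonhole a configuration $(s,y)$ with $y\le\lambda+\epsilon$ that recurs infinitely often, splits the computation into contiguous cycles anchored at $(s,y)$ (so the only ``remainder'' is a single fixed prefix), picks among them a cycle of average at most $\lambda+\epsilon$, and then takes a minimal-length low-average cycle; minimality excludes subcycles and yields the bounds on the number of low- and high-valued selecting configurations and on the maximal value. Alternatively, your own counting closes your gap with no case split on boundedness: the acyclic remainder also consists of pairwise distinct configurations, so its $a$ selecting values sum to at least $a^2/(2|S|)-a$, and hence the prefix average is at least $(\lambda+\delta)+\bigl(a^2/(2|S|)-(\lambda+\delta+1)a\bigr)/m\ \ge\ (\lambda+\delta)-|S|(\lambda+\delta+1)^2/(2m)$, which exceeds $\lambda$ once $m$ is large, contradicting $\liminf\le\lambda$. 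Either route works, but the step has to be carried out; as submitted, the ``only if'' direction is incomplete.
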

\begin{proof}
Observe that having $\compFin_0, \compFin_c$ as above, the computation $\comp = \compFin_0 (\compFin_c)^{\infty}$ is a valid computation
and it satisfies $\limavgS{S}(\comp) \leq \lambda$.

Conversely, assume that there is a computation $\comp$ with $\limavgS{S}(\comp) \leq \lambda$.
Consider $\epsilon > 0$ and pick a cycle subcomputation $\compFin$ from $\comp$ of the average value at most $\lambda + \epsilon$ of the minimal length (all shorter subcomputations have higher average). Such a cycle exists as there has to be a configuration $(s,y)$ with $y \leq \lambda + \epsilon$ that occurs infinitely often. Otherwise, $\limavgS{S}(\comp) \geq \lambda + \epsilon$. Then, we divide $\comp$ into cycles with ends with configuration $(s,y)$ and there has to be a cycle with the average value at most $\lambda + \epsilon$.

We show that this minimal $\compFin$ has few \emph{selecting configurations}, which are configurations with a selecting state.
Let $L$ be the number of selecting configurations in $\compFin$ with the counter's value at most $\lceil \lambda  \rceil$ and let
$H$ be the number of selecting configurations with the counters value at least  $\lceil \lambda  \rceil + 1$.
We lower the average if we replace the value of the configurations of the first type by $0$ and the second by $\lceil \lambda  \rceil + 1$ and get
\[
\frac{ (\lceil \lambda  \rceil + 1) H}{L+H} \leq \lambda + \epsilon
\]
and hence
\(
H \leq \frac{\lambda + \epsilon}{1-\epsilon} \cdot L
\)
Assuming that $\epsilon \leq 0.5$, we can bound $H \leq 2\cdot(\lceil \lambda  \rceil + 1) L$.

Now, we give a bound on $L$. Due to minimality assumption on $\compFin$, it cannot contain subcycles with the same properties. 
Suppose it has a subcycle $\tau$. Due to minimality assumption, the subcycle has the average value exceeding $\lambda + \epsilon$.
But then, $\compFin'$ obtained from $\compFin$ by removal of $\tau$ has a smaller average and a shorter length. A contradiction.
It follows that for every state $q$ and for every value $x \leq \lceil \lambda  \rceil$ there is at most one configuration $(q,x)$ in $\compFin$. 
Therefore, $L \leq (\lceil \lambda  \rceil +1) |S|$.
and hence 
\[
L+H \leq (\lceil \lambda  \rceil +1 )|S|  + 2\cdot (\lceil \lambda  \rceil +1)^2|S| \leq 2 \cdot |S| \cdot (\lceil \lambda  \rceil + 2)^2.
\]

As previously observed the minimal value of a configuration from $L$ is $0$ and from $H$ is $\lceil \lambda  \rceil +1$. 
Suppose that there is a single high value $B$ is $\compFin$ and all other values take the minimal possible value. Then, we get:
\[
\frac{B + (H-1)  \lceil \lambda  \rceil}{L+H} \leq \lambda 
\]
thus
\[
B \leq  \lceil \lambda  \rceil (L+1) \leq (\lceil \lambda  \rceil +1)^2 |Q|
\]
and that is the bound on the maximal value of a selecting configuration.
\end{proof}

We can check in $\NP$ whether there exist  $\compFin_0, \compFin_c$ satisfying the conditions from 
Lemma~\ref{l:LemmaOndDimNatural}.

\keyIdeas{} We non-deterministically pick all selecting configurations $(s_1, x_1), \ldots, (s_m,x_m)$ from $\compFin_c$.
Then, we check reachability from $(q_0, 0)$ to $(s_1,x_1)$, and 
for each $i <m$ reachability over non-selecting configurations
from $(s_{i},x_{i})$ to $(s_{i+1},x_{i+1})$, and from $(s_m,x_m)$ to $(s_1,x_1)$. 
All these reachability checks can be done in \NP.
Finally, we check $\frac{1}{m} \sum_{i=1}^m x_i \leq \lambda$.
All these checks can be done is \NP. The number of configurations $m$ as well as the size of each configuration is polynomially bounded due to Lemma~\ref{l:LemmaOndDimNatural}. In consequence, we have:

\begin{theorem}
The average problem for $\VASS(\N,1)$ is $\NP$-complete.
\end{theorem}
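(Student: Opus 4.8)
The plan is to establish both directions of $\NP$-completeness by combining a known hardness result with the structural characterization from Lemma~\ref{l:LemmaOndDimNatural}. The overall strategy mirrors what the paper already sketches: hardness comes from reachability, and membership in $\NP$ comes from guessing and verifying a small witness.

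For $\NP$-hardness, I would reduce from the reachability problem for $\VASS(\N,1)$, which is $\NP$-complete~\cite{haase2009reachability}. Given a $\VASS(\N,1)$ $\vass$ and a target configuration $(t,v)$, I would construct an instance of the average problem in which the only way to keep the limit-average below a suitable threshold is to route a computation through $(t,v)$. Concretely, one adds a fresh sink state $t^*$ with a self-loop of counter update $0$, makes $t^*$ the unique selecting state, and forces the counter to have the right value when $t^*$ is entered — e.g. by first resetting (or offsetting) the counter so that being in $t^*$ at value $0$ is reachable iff $(t,v)$ was reachable, and setting $\lambda = 0$. Then any computation with limit-average $\le 0$ over $\{t^*\}$ must eventually loop in $t^*$ with counter value $0$, which is possible iff $(t,v)$ is reachable. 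Since the construction is polynomial, $\NP$-hardness follows.

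For $\NP$-membership, I would invoke Lemma~\ref{l:LemmaOndDimNatural}: a computation $\comp$ with $\limavgS{S}(\comp) \le \lambda$ exists iff there are subcomputations $\compFin_0$ (from $(q_0,0)$ to some $(s,x)$ with $x \le \lambda$) and $\compFin_c$ (a cycle at $(s,x)$) satisfying conditions (a)–(c), in particular with only $O(|S|\cdot|\lambda|^2)$ selecting configurations, each of polynomially bounded counter value. The $\NP$ procedure then guesses the sequence of selecting configurations $(s_1,x_1),\ldots,(s_m,x_m)$ appearing in $\compFin_c$ — polynomially many, each of polynomial bit-size by Lemma~\ref{l:LemmaOndDimNatural} — together with the starting configuration $(s,x)=(s_1,x_1)$. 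It then verifies: reachability from $(q_0,0)$ to $(s_1,x_1)$ in $\vass$; for each $i$, reachability from $(s_i,x_i)$ to $(s_{i+1},x_{i+1})$ (and from $(s_m,x_m)$ back to $(s_1,x_1)$) using only configurations whose state is not in $S$, which one encodes by passing to the subgraph of $\vass$ restricted to non-selecting states between consecutive checkpoints; and finally the arithmetic inequality $\frac{1}{m}\sum_{i=1}^m x_i \le \lambda$. Each reachability check is an instance of $\VASS(\N,1)$ reachability, hence in $\NP$, and the arithmetic check is polynomial-time; composing polynomially many $\NP$ checks stays in $\NP$. Combining both directions yields that the average problem for $\VASS(\N,1)$ is $\NP$-complete.

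The main obstacle is the $\NP$-membership argument, and specifically ensuring that the guessed witness is genuinely polynomial-size: this rests entirely on the bounds in Lemma~\ref{l:LemmaOndDimNatural}, so the burden of that step is really pushed into the lemma. A secondary subtlety is handling the ``reachability over non-selecting configurations'' requirement — one must be careful that intermediate configurations between checkpoints avoid $S$, which is handled by restricting to the appropriate state-subgraph, and that nonnegativity of the counter is maintained throughout, which is automatic since every reachability query is posed in $\VASS(\N,1)$.
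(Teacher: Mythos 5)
Your proposal is correct and follows essentially the same route as the paper: $\NP$-hardness by reduction from $\VASS(\N,1)$ reachability (which the paper omits as straightforward, and your offsetting construction realizes), and $\NP$-membership by guessing the polynomially many, polynomially bounded selecting configurations from Lemma~\ref{l:LemmaOndDimNatural} and verifying the intermediate $\VASS(\N,1)$ reachability queries and the average inequality, exactly as in the paper's key-ideas argument.
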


\subsection{The multi-dimension average problem}
We show that the (decision variant of the) multi-dimensional average problem for $\VASS(\N)$ is undecidable.
A related problem, called the average-value problem, has been studied in~\cite{DBLP:conf/concur/ChatterjeeHO19}. 
In that problem, the values of all counters in each configuration $(q,\vec{x})$ are aggregated into a single number, called \emph{the cost}, 
by computing dot-product of $\vec{x}$ and a cost vector $\vec{c}_q \in N^k$.
A cost vector $\vec{c}_q$ depends on the state $q$ in the configuration. The average-value problem asks whether
there exists a computation such that the limit average of costs is less or equal to a threshold $\lambda$.
The problem for $\VASS(\N)$ with threshold $0$ is undecidable~\cite[Theorem~24]{DBLP:conf/concur/ChatterjeeHO19}.
Threshold $0$ in $\VASS(\N)$ means that whenever a cost vector is non-zero at component $i$ (i.e, $\vec{c}_q[i] \neq 0$), then counter's $i$ value should be $0$. 
This constraint is expressed in the multi-dimensional average problem and hence we have:

\begin{theorem}
The decision variant of the  multi-dimensional average problem for $\VASS(\N)$ is undecidable.
\end{theorem}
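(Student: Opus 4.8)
The plan is to reduce from the average-value problem for $\VASS(\N)$ with threshold $0$, which is known to be undecidable by \cite[Theorem~24]{DBLP:conf/concur/ChatterjeeHO19}. Recall that an instance of that problem is a $\VASS(\N,k)$ $\vass$ together with cost vectors $\vec{c}_q \in \N^k$ for each state $q$, and the question is whether there is a computation $\comp$ whose limit average of costs $\liminf_{n\to\infty}\frac1n\sum_{i=0}^{n-1} \vec{c}_{q_i}\cdot\vec{x}_i$ (where $\comp[i]=(q_i,\vec{x}_i)$) is at most $0$. Since all counter values are nonnegative and all entries of the cost vectors are nonnegative, this limit average is $0$ if and only if it is not positive, and this happens exactly when, along the computation, every time the machine is in a state $q$ with $\vec{c}_q[i]\neq 0$ for some coordinate $i$, the value of counter $i$ at that configuration is $0$ — more precisely, the long-run average of each product $\vec{c}_{q_i}\cdot\vec{x}_i$ must vanish.

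First I would turn this ``product-is-zero-on-average'' condition into a threshold condition on the raw counter values over a suitable set of selecting states, so that it fits the format of the multi-dimensional average problem of this paper. The key observation is that a coordinate $i$ of a configuration $(q,\vec{x})$ contributes to the cost precisely when $\vec{c}_q[i]>0$; so for each counter $i$ I set $S[i] := \{\, q \in Q \mid \vec{c}_q[i] > 0 \,\}$ and set the threshold $\thresholds[i] := 0$. Then $\limavgS{S[i]}^i(\comp)$ is the limit average of the values of counter $i$ taken only over the configurations whose state lies in $S[i]$; since counter values are nonnegative, $\limavgS{S[i]}^i(\comp) \le 0$ forces the average over those configurations to be $0$, i.e. forces counter $i$ to be $0$ at (almost) every visit to a state in $S[i]$. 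One then needs to check that ``counter $i$ is $0$ whenever in a state of $S[i]$'' is equivalent to ``the limit average of $\vec{c}_{q_i}\cdot\vec{x}_i$ is $0$'': the forward direction is immediate; for the converse, if some coordinate $i$ has positive counter value with positive frequency at states of $S[i]$, then, because the costs are bounded below by $1$ there and the counter values are integers, the cost contribution has positive long-run average, contradicting threshold $0$. A mild subtlety is handling the difference between $\liminf$ and genuine frequency, and the case where a selecting state is visited only finitely often — but in the latter case that coordinate's limit average over $S[i]$ is undefined, which one can sidestep by observing that the undecidable instance in \cite{DBLP:conf/concur/ChatterjeeHO19} can be taken to have all relevant states visited infinitely often along any candidate witnessing computation, or by adding harmless self-loops on a fresh selecting state so that every $S[i]$ is visited infinitely often without changing costs.

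Putting this together: the instance $(\vass, \vec{S}, \vec{0})$ of the multi-dimensional average problem for $\VASS(\N)$ is a ``yes'' instance if and only if the original average-value instance with threshold $0$ is a ``yes'' instance, and the reduction is clearly computable. Since the latter problem is undecidable, so is the decision variant of the multi-dimensional average problem for $\VASS(\N)$.

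I expect the main obstacle to be the bookkeeping around definedness: the multi-dimensional average problem requires $\limavgS{\vec{S}}(\comp)$ to be \emph{defined}, i.e. every $S[i]$ is visited infinitely often, whereas a witnessing computation for the average-value problem need not visit all states of all the $S[i]$'s infinitely often. The cleanest fix is a small gadget — attach to the target/recurring part of the construction, or to every state, a loop through a fresh bookkeeping state that is added to every $S[i]$ and carries counter value $0$ there — so that definedness is automatic and the zero-cost semantics is preserved. Verifying that this gadget does not create new spurious witnessing computations (and does not affect the nonnegativity dynamics) is the one place where care is genuinely needed; everything else is the routine translation sketched above.
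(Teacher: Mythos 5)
Your reduction is essentially the paper's own proof: the paper likewise reduces from the average-value problem with threshold $0$ of \cite{DBLP:conf/concur/ChatterjeeHO19} (Theorem~24), observing that the threshold-$0$ constraint ``counter $i$ must be $0$ whenever the cost vector is nonzero in component $i$'' is expressed by taking $S[i]$ to be those states and setting all thresholds to $0$. Your discussion of definedness and of the two directions of the equivalence is in fact more detailed than the paper's one-line justification.
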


\subsection{The expected multi-dimensional average problem}

We first study probabilistic $\VASS(\N)$ under the strict semantics and give the precise complexity. 
Next, we consider the relaxed semantics, where we have the exact complexity in the strongly-connected case
and a hardness result in the general case.

\subsubsection{Probabilistic natural-valued VASS under the strict semantics}
Consider a probabilistic $\VASS(\N)$ $\vass$ under the strict semantics.
To check whether every path of $\vass$ corresponds to a valid computation, we examine each counter $i$ separately and check whether it can reach a negative value from some initial configuration. This can be done in polynomial time with the standard reachability analysis. If a negative value for some counter is reachable, then the expected limit-average is undefined under the strict semantics.
Otherwise, every path in $\vass$ corresponds to a valid computation and we can consider $\vass$ as 
a $\VASS(\Z)$ as the non-negativity restriction is vacuous for $\vass$. Therefore, we apply Theorem~\ref{th:expectedZ} and 
compute the expected value for $\vass$. In consequence, we have:

\begin{theorem}
\label{th:expectedNstrict}
The expected average problem for probabilistic $\VASS(\N)$ under the strict semantics can be solved in polynomial time.
\end{theorem}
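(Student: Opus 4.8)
The plan is to reduce the problem for probabilistic $\VASS(\N)$ under the strict semantics directly to the already-solved $\VASS(\Z)$ case, using the observation that the strict semantics is only well-defined precisely when the non-negativity constraint never actually bites. First I would recall that, by definition, the strict semantics requires \emph{every} path to correspond to a valid computation; equivalently (by the Remark in the preliminaries, since being a valid computation is a safety property) it requires that almost every path stays non-negative. So the algorithm splits into two phases: a feasibility check deciding whether the strict semantics is even applicable, and, if so, the actual computation of the expected limit-averages.

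For the feasibility check, I would examine each counter $i \in \{1,\dots,k\}$ in isolation. A counter can go negative along some path starting from the initial configuration $(q_0,\vec 0)$ if and only if some state-configuration pair with $x[i] < 0$ is reachable in the underlying $\VASS(\Z)$ when we ignore all other counters; equivalently, iff there is a path from $q_0$ along which the running sum of the $i$-th components of the updates dips below $0$. This is a standard single-counter reachability / shortest-path-style question on the finite transition graph (with integer weights on the $i$-th coordinate), solvable in polynomial time — e.g.\ by a Bellman--Ford-type computation of the minimum reachable value of counter $i$, or by the $\VASS(\Z)$ reachability / coverability machinery restricted to one dimension. If for some $i$ a negative value is reachable, then there is a path (hence, with positive probability, a computation in the underlying chain) that terminates before becoming infinite; the strict semantics is not defined, so the expected limit-average is undefined and we output that.

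If instead no counter can ever become negative along any path, then the non-negativity restriction of $\VASS(\N)$ is vacuous for $\vass$: every infinite path of the underlying transition system is automatically a valid $\VASS(\N)$-computation, and the probability measure $\prob_{\vass}^s$ coincides verbatim with the measure $\prob_{\vass}$ one would put on the \emph{same} tuple $\tuple{Q,Q_0,\delta,P,\mu}$ viewed as a probabilistic $\VASS(\Z)$. In particular the random variables $\randComp \mapsto \limavgS{S[i]}^i(\randComp)$ have the same distribution under both semantics, so $\expected_{\vass}(\limavgS{\vec S})$ in the strict-$\VASS(\N)$ sense equals the quantity computed for the $\VASS(\Z)$ reading of $\vass$. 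I then invoke Theorem~\ref{th:expectedZ}, which computes this expected multi-dimensional average for probabilistic $\VASS(\Z)$ in polynomial time, and return its output. Composing the polynomial-time feasibility check with the polynomial-time call to the $\VASS(\Z)$ procedure gives an overall polynomial-time algorithm, establishing Theorem~\ref{th:expectedNstrict}.

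The only genuinely delicate point is justifying that the per-counter reachability analysis is both sound and complete for the ``is the strict semantics applicable'' question — i.e.\ that checking each coordinate independently suffices. Completeness is immediate (if the combined system has a path going negative in coordinate $i$, that same path witnesses coordinate $i$ going negative when other coordinates are projected away). Soundness is the slightly subtler direction: if coordinate $i$ never goes negative \emph{in isolation}, does it never go negative in the full system? Yes, because disabling transitions to enforce non-negativity in \emph{other} coordinates can only \emph{remove} paths, never create new ones, so the set of reachable values of counter $i$ in the full $\VASS(\N)$ is a subset of those reachable in the single-counter $\VASS(\Z)$ relaxation; hence if the relaxation stays non-negative, so does the full system, and then in fact no transition is ever disabled and all three views ($\VASS(\N)$ strict, the projections, the $\VASS(\Z)$ reading) coincide. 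Everything else is a routine invocation of earlier results.
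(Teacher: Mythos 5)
Your proposal is correct and follows essentially the same route as the paper: check, for each counter separately, whether a negative value is reachable from the initial configuration (polynomial time), declare the expectation undefined if so, and otherwise observe that the non-negativity restriction is vacuous, so the system can be treated as a probabilistic $\VASS(\Z)$ and Theorem~\ref{th:expectedZ} applies. Your extra discussion of soundness of the per-counter decomposition only makes explicit what the paper leaves implicit.
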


\subsubsection{Probabilistic natural-valued VASS under the relaxed semantics}

\subparagraph*{The finite strongly-connected case.} 
Consider a probabilistic $\VASS(\N)$ $\vass$, which is strongly connected. We show that if the expected limit-average is 
finite, then the strict and the relaxed semantics coincide.
We first assume that $\vass$ is single-dimensional. 
Using the classification from Lemma~\ref{l:probabilisticClassification} applied to $\vass$ considered as a $\VASS(\Z,1)$, we observe that:
\begin{itemize}
    \item If $\expectedGain < 0$ or $\expectedGain = 0$ and $\vass$ is not totally bounded, 
    then a random computation $\randComp$ 
    (under the $\VASS(\Z)$ semantics) satisfies $\limavginfS{S}(\randComp) = -\infty$, and hence it is not a valid computation of the $\VASS(\N)$. 
    Therefore, the expected limit-average under the relaxed semantics is undefined for $\vass$.

    \item If $\expectedGain > 0$, then $\limavginfS{S}(\randComp) = \infty$. Therefore, if
    the set of random computations $\randComp$ (under the $\VASS(\Z)$ semantics) that are also valid computations
    under the $\VASS(\N)$ semantics has a positive probability, then 
   the expected limit-average under the relaxed semantics is defined and infinite. Otherwise, it is undefined.

    \item If $\expectedGain = 0$ and $\vass$ is totally bounded, then (as we observe in Section~\ref{s:expectedZonSCC}) in each configuration, 
    the state uniquely determines the counters value. 
    Therefore, we can check whether counter values in all states are non-negative. 
    If this is the case, then all paths correspond to valid computations, the expected limit-average is defined and finite, 
    and we can compute it with Theorem~\ref{th:expectedNstrict}.
Otherwise, observe that in a strongly-connected $\vass$ every state is visited with probability $1$ and 
    hence the expected value is undefined.
\end{itemize}

Therefore, for the expected limit-average under the relaxed semantics to be defined and finite, the expected gain w.r.t. 
every counter has to be $0$ and it has to be totally bounded. Furthermore, we check for each counter independently whether 
every path corresponds to a valid computation in $\VASS(\N)$. Since we consider all paths, we can make these checks independently for all counters.
In consequence we have the following:

\begin{theorem}
\label{th:expectedNrelaxedSCC}
Deciding whether the expected limit-average is defined and finite over strongly-connected probabilistic $\VASS(\N)$ under the relaxed semantics 
can be solved in polynomial time. Furthermore, it if is it can be computed in polynomial time.
\end{theorem}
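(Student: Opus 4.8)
The plan is to combine the classification from Lemma~\ref{l:probabilisticClassification} with the strict-semantics algorithm of Theorem~\ref{th:expectedNstrict}, following the case analysis already laid out in the ``finite strongly-connected case'' discussion above. First I would reduce to the single-dimensional situation: since the expected limit-average is computed over \emph{all} computations and the relaxed semantics conditions on the global event ``the path is a valid $\VASS(\N)$ computation'' (which is the conjunction over counters of the per-counter non-negativity events), I would argue that finiteness of the multi-dimensional expectation forces each counter's expected gain to be $0$ \emph{and} the VASS to be totally bounded with respect to that counter. Concretely, for a fixed counter $i$, apply Lemma~\ref{l:probabilisticClassification} to $\vass$ viewed as a $\VASS(\Z,1)$ on coordinate $i$: if $\expectedGain_i < 0$ or ($\expectedGain_i = 0$ and not totally bounded on $i$), then $\limavginfS{S[i]}(\randComp) = -\infty$ with probability $1$, so with probability $1$ the $\Z$-path violates non-negativity of counter $i$, i.e.\ the set of valid $\VASS(\N)$ computations has probability $0$ — but then the relaxed measure $\prob_\vass^r$ is undefined, contradicting definedness of the expectation. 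If $\expectedGain_i > 0$, then $\limavgsupS{S[i]}(\randComp)=\infty$ a.s., so on the conditioned space the $i$-th component is $+\infty$ a.s., making the expectation infinite, not finite. Hence the only surviving case is $\expectedGain_i = 0$ and $\vass$ totally bounded on counter $i$ for every $i$.

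Next I would exploit total boundedness. When $\vass$ is strongly connected and totally bounded on every counter, in every reachable configuration the control state uniquely determines the full counter vector (the argument in the proof of part~(3) of Lemma~\ref{l:probabilisticClassification} applies coordinate-wise: any subcomputation between two occurrences of a state $s$ is a cycle, hence has zero gain on every counter). So I can compute, in polynomial time, the function $h \colon Q \to \Z^k$ assigning to each reachable state its forced counter vector (by a single graph traversal from $q_0$, propagating the gain along transitions; consistency is guaranteed by total boundedness). Now the key dichotomy: either $h(q) \in \N^k$ for every state $q$ reachable from $q_0$ — in which case \emph{every} path is a valid $\VASS(\N)$ computation, so $\prob_\vass^r = \prob_\vass$, the strict semantics apply, and Theorem~\ref{th:expectedNstrict} lets us decide definedness and compute the value in polynomial time — or some reachable $q$ has $h(q) \notin \N^k$. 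In the latter case, because $\vass$ is strongly connected, every state is visited with probability $1$ along a random path, so with probability $1$ the $\Z$-path reaches a configuration with a negative counter; thus the set of valid computations has probability $0$ and $\prob_\vass^r$ is undefined, so the answer is ``undefined''. All of these checks — computing frequencies $x_q$ and the expected gains $\expectedGain_i$ (\cite[Chapter~10.5]{BaierBook}), testing total boundedness per counter by searching for a nonzero-gain cycle, computing $h$, and testing $h(Q) \subseteq \N^k$ — run in polynomial time, and the final value computation is the polynomial-time Markov-chain-with-silent-moves reward computation already invoked in Section~\ref{s:expectedZonSCC}.

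For the multi-dimensional case I would simply observe that definedness requires, by the single-dimensional analysis, that every counter be totally bounded with zero expected gain; once that holds, total boundedness is simultaneous for all counters, $h$ is well-defined on all of them at once, and the single dichotomy above (all $h(q) \in \N^k$ versus not) governs the whole problem. If defined, the value is the vector whose $i$-th component is the strict-semantics value of counter $i$, obtained via Theorem~\ref{th:expectedNstrict} applied to $\vass$ as a $\VASS(\Z)$. I expect the main obstacle to be the bookkeeping in the ``undefined'' arguments — specifically, arguing carefully that in the $\expectedGain_i = 0$, not-totally-bounded case the relevant negativity event genuinely has probability $1$ (this is exactly part~(4) of Lemma~\ref{l:probabilisticClassification}, so it is already available), and that conditioning on a probability-zero event is what makes $\prob_\vass^r$ undefined rather than, say, vacuously finite. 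Everything else is assembling already-proved ingredients.
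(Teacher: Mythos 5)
Your proposal is correct and follows essentially the same route as the paper: apply Lemma~\ref{l:probabilisticClassification} coordinate-wise to conclude that definedness-and-finiteness forces zero expected gain and total boundedness for every counter, then use the fact that in the strongly-connected totally bounded case the state determines the counter values, check their non-negativity (undefined with probability-zero valid set otherwise, since every state is visited almost surely), and in the all-non-negative case fall back on the strict-semantics algorithm of Theorem~\ref{th:expectedNstrict}, with all checks polynomial. The only cosmetic difference is your explicit construction of the map $h$ and your phrasing of the $\expectedGain>0$ case (where the outcome may be ``infinite'' or ``undefined'' depending on whether the valid-path set has positive probability), but either way the decision ``defined and finite'' is answered negatively, matching the paper.
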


\subparagraph*{The general case.} We present only a hardness result. 
The coverability problem for  $\VASS(\N)$, which is $\EXPSPACE$-complete~\cite{lipton1976reachability}, reduces to 
(the decision version of) the expected limit-average problem for $\VASS(\N)$. 
The reduction is rather straightforward with minor technical difficulties (we need to ensure that the expected value of each counter is finite).
In consequence, we have:

\begin{restatable}{theorem}{ReductionToCoverability}
\label{th:HardnessCover}
The problem, given a probabilistic $\VASS(\N,k)$ $\vass$ under the relaxed semantics, $S \subseteq Q$ and $\vec{x} \in \Q^k$, decide whether 
$\expected_{\vass}(\limavgS{\vec{S}}) < \vec{x}$  is $\EXPSPACE$-hard. 
\end{restatable}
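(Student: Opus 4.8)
The plan is to reduce from the \emph{complement} of the coverability problem for $\VASS(\N)$; since coverability for $\VASS(\N)$ is $\EXPSPACE$-complete~\cite{lipton1976reachability} and $\EXPSPACE$ is closed under complement, this suffices for $\EXPSPACE$-hardness. A coverability instance is a $\VASS(\N,k)$ $\mathcal{B}$ with an initial state $q_{in}$ and a target $(q_f,\vec m)$, and asks whether some computation of $\mathcal{B}$ from $(q_{in},\vec 0)$ reaches $q_f$ with every counter at least $\vec m$. From such an instance I would build, in polynomial time, a probabilistic $\VASS(\N,k)$ $\vass$, a set $S\subseteq Q$ and a threshold $\vec x\in\Q^k$ such that $\expected_{\vass}(\limavgS{\vec S})$ is defined, finite and $<\vec x$ \emph{if and only if} $\mathcal{B}$ is \emph{not} a positive coverability instance. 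Fix $\epsilon=\frac12$ and let $M$ be one plus the largest absolute value appearing in an update vector of $\mathcal{B}$, so $M$ has polynomially many bits.

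The construction: $\vass$ keeps $\mathcal{B}$'s $k$ counters and its states, and adds a state $G$, a state $B$, and a ``verification chain'' $c_0,\dots,c_k$. From every state $q$ of $\mathcal{B}$ there is, with probability $\epsilon$, a transition to $G$ with update $\vec 0$ (an ``abort''); the remaining probability mass is split uniformly among the transitions of $\mathcal{B}$ out of $q$ (keeping their updates), except that from $q_f$ this remaining mass goes into the chain, where the edge $c_{j-1}\to c_j$ has update $-m_j\cdot\vec e_j$ and $c_k\to B$. The states $G$ and $B$ each have a single self-loop with update $\vec 0$ and probability $1$. Finally set $S=\{G\}$ and $\vec x=(5M,\dots,5M)$; everything here is clearly computable in polynomial time.

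For correctness, recall that the relaxed semantics makes $\prob^r_{\vass}$ the measure of $\vass$ (viewed as a $\VASS(\Z)$) conditioned on the event $V$ of non-negative infinite paths. Almost every path of $\vass$ eventually loops forever in $G$ (after an abort) or forever in $B$ (after reaching $q_f$ and running the chain), and intersecting with $V$ keeps, in the first case, only paths whose prefix stayed non-negative --- i.e.\ genuine computation prefixes of $\mathcal{B}$ --- and, in the second case, only paths whose valuation at $q_f$ dominates $\vec m$, i.e.\ \emph{covering computations}. Aborting immediately from $q_{in}$ is valid and has probability $\epsilon$, so $\prob_{\vass}(V)\ge\epsilon$ and $\prob^r_{\vass}$ is well defined; moreover a covering computation occurs with positive $\prob^r_{\vass}$-probability exactly when $\mathcal{B}$ is a positive instance. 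If $\mathcal{B}$ \emph{is} positive, a covering computation never visits $S=\{G\}$, so $\limavgS{S}^i$ is undefined on a set of positive probability, hence $\expected_{\vass}(\limavgS{\vec S})$ is undefined and the answer is negative. If $\mathcal{B}$ is \emph{not} positive, $\prob^r_{\vass}$ is carried by computations that abort into $G$ after a valid prefix; such a computation visits $S$ infinitely often and $\limavgS{S}^i$ is the value of counter $i$ frozen at the abort time $T$, an integer in $[0, M\cdot T]$. Since every $\mathcal{B}$-state aborts with probability at least $\epsilon$, the abort time is light-tailed with $\expected[T\cdot\mathbf{1}_{\mathrm{abort}}]\le 1/\epsilon$, so $\expected_{\vass}(\limavgS{S}^i)\le (M/\epsilon)/\prob_{\vass}(V)\le M/\epsilon^2 = 4M < 5M$ for every $i$; thus $\expected_{\vass}(\limavgS{\vec S})$ is defined, finite and strictly below $\vec x$, and the answer is affirmative.

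The step I expect to be the main obstacle is the one flagged in the statement: making the expected limit-average \emph{finite} in the ``not coverable'' case. In a (probabilistic) $\VASS(\N)$ one cannot force a counter to a prescribed value --- any state carrying a decrementing self-loop has conditional probability $0$ of generating a valid infinite path --- so the surviving computations must \emph{freeze} the a priori unbounded counters of $\mathcal{B}$ in an absorbing loop, and their expected frozen values then have to be bounded by something polynomially large in $|\mathcal{B}|$. The geometric abort device is precisely what makes the abort time, and hence those frozen values, have an $O(1/\epsilon)$-bounded expectation; getting this estimate right, and checking that $S$ is visited infinitely often with probability one exactly along the surviving computations (so that $\expected_{\vass}(\limavgS{\vec S})$ is undefined precisely in the covering case), is the only delicate point in an otherwise routine simulation argument.
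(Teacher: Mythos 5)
Your proposal is correct in substance but takes a genuinely different route from the paper's proof. The paper reduces coverability \emph{positively}: it normalizes the target to $\vec{0}$, adds one extra counter and a single absorbing sink $r$, puts a zero-update transition from the target state $t$ to $r$ and, from every state, a probability-$\tfrac12$ escape to $r$ with update $\vec{1}$; the geometric escape makes every expected limit-average defined and finite (bounded by $2$ in the simulated dimensions), and coverability is read off from whether the extra counter's expected frozen value is strictly below $1$, so the instances are distinguished purely by a value comparison. You instead reduce from the \emph{complement} of coverability (legitimate, since $\EXPSPACE$ is closed under complement), verify the covering condition $\vec{v}\geq\vec{m}$ at $q_f$ by a subtraction chain whose non-negativity under the relaxed semantics is exactly the covering test, and separate the two cases by \emph{definedness}: coverable instances yield a positive-$\prob^r_\vass$-probability set of runs absorbed in $B$ that never visit $S=\{G\}$, so $\expected_{\vass}(\limavgS{\vec{S}})$ is undefined and the answer is ``no,'' while non-coverable instances give a defined expectation, and your estimate ($\expected[T]\leq 2$ for the geometric abort time, frozen values at most $M\cdot T$, $\prob_\vass(V)\geq\tfrac12$, hence at most $4M<5M$) is sound. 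What your route buys is a direct treatment of a nonzero target vector and no auxiliary counter; what the paper's route buys is that the expectation is defined and finite on \emph{every} produced instance, so the hardness does not lean on the convention that an undefined expectation falsifies ``$\expected_{\vass}(\limavgS{\vec{S}})<\vec{x}$'' --- precisely the technical point the paper flags (``we need to ensure that the expected value of each counter is finite''). Your argument proves the theorem as literally stated, but it would not survive a promise that the expectation is defined, whereas the paper's reduction would; two small repairs are also needed: states of the input VASS with no outgoing transitions must give the abort move probability $1$ (otherwise the outgoing probabilities do not sum to $1$), and you should say explicitly that chain states and the two sinks carry probability-$1$ moves so that the constructed object is a well-formed probabilistic $\VASS(\N,k)$.
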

\begin{proof}
Consider a $\VASS(\N,k)$ $\vass$, an initial configuration $(s,\vec{x}_1)$, and a target configuration  $(t,\vec{x}_2)$.
Without loss of generality, we assume that $\vec{x}_1 = \vec{x}_2 = \vec{0}$ and the update of each counter is $-1,0,1$. 
We construct a probabilistic $\VASS(\N)$ $\vass^P$ based on $\vass$ by adding an additional counter $k+1$ and a sink state $r$,
which has only a single outgoing transition, which is a self-loop upon which counters do not change values, i.e., $(r,r,\vec{0})$.
We add a transition from $t$ to $r$ labeled with $\vec{0}$ and assign to it some positive probability.  
To make sure that the expected value of $\vass^P$ is defined and finite, we add to every state $\vass^P$ a transition to $r$ labeled with $\vec{1}$
with probability $\frac{1}{2}$.  We assign positive probabilities to the remaining transitions. 
Observe that a random computation reaches the sink $r$ with probability $1$, and the probability that it happens after more than $n$ steps is 
bounded by $(\frac{1}{2})^n$. The value of the counter after $n$ steps is at most $n$. Therefore, 
the expected limit-average of counters $1,  \ldots, k$ is bounded by $\sum_{j=1}^{\infty} (\frac{1}{2})^i \cdot i = 2$.
Upon reaching $r$ the counter $k+1$ has value $0$, if the previous state was $t$, and $1$ otherwise.  
Therefore, the expected limit-average is strictly less than $(2+\epsilon, \ldots, 2+\epsilon, 1)$ 
(for any $\epsilon > 0$) if and only if there is a  computation from $(s,\vec{0})$ to $(t, \vec{y})$ 
(with any $\vec{y}$) in $\vass$.   
\end{proof}

\subparagraph*{Remark.} The decidability of the problem from Theorem~\ref{th:HardnessCover} is open.

\bibliography{papers}

\end{document}